\newtheorem{theorem}{Theorem} %[section]
\newtheorem{proposition}[theorem]{Proposition}
\newtheorem{example}{Example}
\newtheorem{remark}[theorem]{Remark}
\newtheorem{definition}[theorem]{Definition}
\numberwithin{equation}{section}
\numberwithin{theorem}{section}
\numberwithin{example}{section}
\newcommand{\eq}[1]{\hyperref[eq:#1]{(\ref*{eq:#1})}}
\renewcommand{\sec}[1]{\hyperref[sec:#1]{Section~\ref*{sec:#1}}}
\newcommand{\thm}[1]{\hyperref[thm:#1]{Theorem~\ref*{thm:#1}}}
\newcommand{\lem}[1]{\hyperref[lem:#1]{Lemma~\ref*{lem:#1}}}
\newcommand{\cor}[1]{\hyperref[cor:#1]{Corollary~\ref*{cor:#1}}}
\newcommand{\itm}[1]{\hyperref[itm:#1]{\ref*{itm:#1}}}
\newcommand{\fig}[1]{\hyperref[fig:#1]{Figure~\ref*{fig:#1}}}
\newcommand{\ex}[1]{\hyperref[ex:#1]{Example~\ref*{ex:#1}}}
\newcommand{\nm}[1]{\lVert #1\rVert}
\newcommand{\sem}[1]{[\![ #1 ]\!]}
\newcommand{\bra}[1]{\langle #1 \vert}
\newcommand{\ket}[1]{\vert #1 \rangle}
\newcommand{\tr}[0]{\mathrm{tr}}
\newcommand{\cskip}[0]{{\mathbf{skip}}}
\newcommand{\qif}[1]{\mathbf{case}~#1~\mathbf{end}}
\newcommand{\qwhile}[2]{\mathbf{while}~#1~\mathbf{do}~#2~\mathbf{done}}
\newcommand{\tdist}[2]{\mathrm{T}(#1,~#2)}
\newcommand{\D}[0]{\mathcal{D}}
\newcommand{\E}[0]{\mathcal{E}}
\renewcommand{\H}[0]{\mathcal{H}}
\newcommand{\F}[0]{\mathcal{F}}
\newcommand{\StepsTo}[3]{\langle #1,~\rho \rangle \rightarrow \langle #2,~#3 \rangle}
\newcommand{\StepsToP}[5]{\langle #2, #3 \rangle \xrightarrow[]{#1} \langle #4, #5 \rangle}
\newcommand{\app}[1]{\hyperref[app:#1]{Appendix~\ref*{app:#1}}}
\newcommand{\khh}[1]{}
\newcommand{\szhu}[1]{}
\newcommand{\shh}[1]{}
\newcommand{\xw}[1]{}
\newcommand{\mwh}[1]{}
\def \quwhile {quantum \textbf{while}}
\def \A {\mathcal{A}}
\def \ideal {\mathtt{ideal}}
\newcommand{\rjudgment}[4]{\ensuremath{(#1, #2) \vdash #3 \leq #4}} % robustness judgment notation
\begin{document}

%% Title information
% \iftechrep
\title{Quantitative Robustness Analysis of Quantum Programs (Extended Version)}         %% [Short Title] is optional;
% \else
% \title{Quantitative Robustness Analysis of Quantum Programs}         %% [Short Title] is optional;
% \fi
                                        %% when present, will be used in
                                        %% header instead of Full Title.
%\titlenote{with title note}             %% \titlenote is optional;
                                        %% can be repeated if necessary;
                                        %% contents suppressed with 'anonymous'
%\subtitle{Subtitle}                     %% \subtitle is optional
%\subtitlenote{with subtitle note}       %% \subtitlenote is optional;
                                        %% can be repeated if necessary;
                                        %% contents suppressed with 'anonymous'

%% Author information
%% Contents and number of authors suppressed with 'anonymous'.
%% Each author should be introduced by \author, followed by
%% \authornote (optional), \orcid (optional), \affiliation, and
%% \email.
%% An author may have multiple affiliations and/or emails; repeat the
%% appropriate command.
%% Many elements are not rendered, but should be provided for metadata
%% extraction tools.

%% Author with single affiliation.
\author{Shih-Han Hung}
% \affiliation{
%   \institution{University of Maryland, College Park}
%   \country{USA}
% }

\author{Kesha Hietala}
% \affiliation{
%   \institution{University of Maryland, College Park}
%   \country{USA}
% }

\author{Shaopeng Zhu}
\affiliation{
  \institution{University of Maryland, College Park}
  \country{USA}
}

\author{Mingsheng Ying}
\affiliation{
  \institution{University of Technology Sydney}
  \country{Australia}
}
\affiliation{
  \institution{State Key Laboratory of Computer Science,
    Institute of Software, Chinese Academy of Sciences}
  \country{China}
}
\affiliation{
 \institution{Tsinghua University}
 \country{China}
}

\author{Michael Hicks}
% \affiliation{
%   \institution{University of Maryland, College Park}
%   \country{USA}
% }

\author{Xiaodi Wu}
\affiliation{
  \institution{University of Maryland, College Park}
  \country{USA}
}
\begin{abstract}
Quantum computation is a topic of significant recent interest, with
practical advances coming from both research and
industry. A major challenge in 
quantum programming is dealing with errors (quantum noise) during
execution. Because quantum resources (e.g., qubits) are scarce,
classical error correction techniques applied at the level of the
architecture are currently cost-prohibitive. But while this reality means
that quantum programs are almost certain to have errors, there as yet
exists no principled means to reason about erroneous behavior.
This paper attempts to fill this gap by developing a semantics for
erroneous quantum while programs, as well as a logic for reasoning
about them. This logic permits proving a property we have
identified, called $\epsilon$-robustness, which characterizes
possible ``distance'' between an ideal program and an erroneous one. We
have proved the logic sound, and showed its utility on several case
studies, notably: (1) analyzing the robustness of noisy versions of the
quantum Bernoulli factory (QBF) and quantum walk (QW); (2) demonstrating
the (in)effectiveness of different error correction schemes on
single-qubit errors; and (3) analyzing the robustness of a
fault-tolerant version of QBF.
\end{abstract}

%% 2012 ACM Computing Classification System (CSS) concepts
%% Generate at 'http://dl.acm.org/ccs/ccs.cfm'.
%\begin{CCSXML}
%<ccs2012>
%<concept>
%<concept_id>10011007.10011006.10011008</concept_id>
%<concept_desc>Software and its engineering~General programming languages</concept_desc>
%<concept_significance>500</concept_significance>
%</concept>
%<concept>
%<concept_id>10003456.10003457.10003521.10003525</concept_id>
%<concept_desc>Social and professional topics~History of programming languages</concept_desc>
%<concept_significance>300</concept_significance>
%</concept>
%</ccs2012>
%\end{CCSXML}

%\ccsdesc[500]{Software and its engineering~General programming languages}
%\ccsdesc[300]{Social and professional topics~History of programming languages}
%% End of generated code

\begin{CCSXML}
<ccs2012>
<concept>
<concept_id>10003752.10010124.10010131.10010133</concept_id>
<concept_desc>Theory of computation~Denotational semantics</concept_desc>
<concept_significance>500</concept_significance>
</concept>
<concept>
<concept_id>10003752.10003753.10003758.10010626</concept_id>
<concept_desc>Theory of computation~Quantum information theory</concept_desc>
<concept_significance>500</concept_significance>
</concept>
</ccs2012>
\end{CCSXML}

\ccsdesc[500]{Theory of computation~Denotational semantics}
\ccsdesc[500]{Theory of computation~Quantum information theory}

%% Keywords
%% comma separated list
\keywords{quantum programming, quantum noise, approximate computing}  %% \keywords are mandatory in final camera-ready submission

%% \maketitle
%% Note: \maketitle command must come after title commands, author
%% commands, abstract environment, Computing Classification System
%% environment and commands, and keywords command.
\maketitle

\renewcommand{\shortauthors}{S. Hung, K. Hietala, S. Zhu, M. Ying, M. Hicks, and X. Wu}

\section{Introduction}

Quantum programming has been actively investigated for the past two
decades. Early work on semantics and language design~\citep{Om03,
  SZ00, Sabry-Haskel, Se04, AG05} has been followed up,
in the last few years, by the development of a number of mature languages,
including Quipper \citep{Green2013}, Scaffold \citep{Sca12}, 
LIQUi$\vert\rangle$ \citep{WS2014}, Q\#~\citep{Svore:2018}, and QWIRE \citep{PRZ2017}.
Various program logics have also
been extended for verification of quantum programs \citep{BJ04, CHADHA200619,
  Baltag2011, Feng:2007, Kaku09, Yin11, YYW17}. 
For detailed
surveys, see \citet{Selinger04}, \citet{Gay:2006}, and~\citet{Yin16}.

A major practical challenge in implementing quantum programs is
dealing with errors (aka quantum noise) during execution. 
Most existing work on algorithms and programming languages assumes this problem will be solved by the hardware, as in classical computers, or with fault-tolerant protocols that are designed independently of any particular application \citep{CFM2017}.
As such, the semantics of programs is defined in a manner that ignores the possibility of errors~\citep{Green2013, PRZ2017, WS2014}. 

Unfortunately, providing such a general-purpose, fault-tolerant
quantum computing abstraction appears to be impractical for near-term
quantum devices, for which precisely controllable qubits are expensive, error-prone, and scarce. 
Existing error correction techniques consume a substantial number of
qubits, severely limiting the range of possible computations. For
example, one logical qubit may require $10^3 - 10^4$ physical qubits~\citep{FMMC12}.
Furthermore, fault-tolerant operations on these logical qubits require many more physical operations than their non-fault-tolerant counterparts.

As such, research on
\emph{practical} quantum computation must focus on Noisy
Intermediate-Scale Quantum (NISQ) computers (as phrased
by~\citet{Preskill18}), which will lack general-purpose fault
tolerance. While some particular algorithms have been developed to reflect this
reality~\citep{NC-VQE, IBM-QE}, there is as yet no principled
method to reason about the error-affected performance of quantum
applications. 
Such methods are needed to help guide the design of practical applications for near-term devices.

\vspace{1mm} \noindent \textbf{Contributions.} This paper extends the
quantum \textbf{while}-language~\citep{Yin11} with a semantics that accounts for the
possibility of error, and defines an accompanying logic for
reasoning about erroneous executions. Our work constitutes an
alternative to the common, but impractical, one-size-fits-all approach to fault
tolerance and instead elevates the question of errors to the level of
the programming language. Our approach is inspired by the work of
\citet{CMR2013}, which reasons about classical programs running on unreliable
hardware. We make four main contributions.

First, we present the syntax and semantics (both operational and
denotational) of the \quwhile-language extended to
include noisy operations. In particular, we modify unitary
application to allow the noisy operation $\Phi$ (a superoperator) to occur
with probability $p$. 
This approach permits modeling any local noise occurring during the execution of a quantum program, which is the standard noise model 
considered in the study of quantum error correction and fault-tolerant quantum computation~\citep{Got10}. 
This error model is also used by experimental physicists for building and benchmarking quantum devices in both academia and industry.

Second, we define a notion of \emph{quantum robustness}. In particular, we
say that a noisy program $\widetilde{P}$ is \emph{$\epsilon$-robust
  under $(Q,\lambda)$} if it computes a quantum state at most $\epsilon$
distance away from that of that of its ideal equivalent $P$ when
starting both from states satisfying quantum predicate $Q$~\citep{DP2006} to degree
$\lambda$. Our definition makes use of the so-called \emph{diamond
norm}~\citep{diamond-norm} to account for the potential
enlarging effect of entanglement on the distance.
We generalize the diamond norm to what we call the $(Q,
\lambda)$-diamond norm, which allows us to consider 
only input states that satisfy a quantum predicate $Q$ to degree
$\lambda$. Doing so obtains more accurate bounds when
considering specific quantum devices and/or knowledge of
states owing to the use of classical control operators.
We show that the $(Q, \lambda)$-diamond norm can be
computed by a semidefinite program (SDP) by extending the 
algorithm of~\citet{W09}.

Third, we define a logic for reasoning about
quantum robustness, with the following judgment\footnote{%
Our notation draws an analogy with the typing judgment 
$\Gamma \vdash P : t$. In particular, $Q$ and $\lambda$ 
are ``assumptions'' about inputs just as $\Gamma$ represents assumptions 
about input (their types); $\widetilde{P}$ is the program we are reasoning about; 
and $\epsilon$ is the proved robustness of this program, just as $t$ is the 
proved type of the program. 
Our rules are compositional like those of typing.
%\mwh{I'm not sure we really need this.}%
} 
\begin{equation} \label{eq:judgment}
\rjudgment{Q}{\lambda}{\widetilde{P}}{\epsilon}.
\end{equation}
This judgment states that for any input state that satisfies a quantum
predicate $Q$ to degree $\lambda$, the distance between
$\sem{\widetilde{P}}$ and $\sem{P}$ is then bounded by $\epsilon$. 

We prove our logic is sound. A particular challenge is the rule for
loops, owing to the \emph{termination problem} first studied by~\citet{LY2017}.
In particular, if the loop body generates some error and the loop does
not terminate, it is hard to prove any non-trivial bound on the final
accumulated error. To avoid this difficulty in the setting of
approximate computing, \citet{CMR2013} simply
assume that the loop will terminate within a bounded number of
iterations or a trivial upper bound will be applied.
To capture more complicated cases, we introduce a concept called the
\emph{$(a,n)$-boundedness} of the loop. Intuitively, a loop is
$(a,n)$-bounded if, for every input state, after $n$ iterations it is
guaranteed that with probability at least $1-a$ it has exited the loop. 
The probability is due to the quantum measurement in the loop guard.
 It is easy to see that $(a,n)$-boundedness implies the termination
of the loop. Pleasantly, $(a,n)$ bounding the
ideal loop is sufficient to reason about a noisy loop with any error model
$\Phi$ in the loop body.   

Our fourth and final contribution is to develop several case studies
that demonstrate the utility of reasoning about errors at the program
level. 
\begin{itemize}[leftmargin=*]
\item 
We start with important examples in the \quwhile-language, the
\emph{quantum Bernoulli factory} (QBF) and the \emph{quantum walk
  (QW)}, and directly analyze the robustness of noisy versions
$\widetilde{QBF}$ and $\widetilde{QW}$ using our logic. In the
process, we prove the $(a,n)$-boundedness of the loops in
$\widetilde{QBF}$ and $\widetilde{QW}$ using both analytical and  numerical methods. 

\item
We also demonstrate the use of our semantics to show the efficiency of
different error correction schemes. Consider the error  correction of
a single qubit in the environment where a single bit flip error
happens with probability $0< p<1/2$. We consider three schemes: (1)
$P_1$ without any error correction; (2) $P_2$ with error correction
for bit flips; (3) $P_3$ with error correction for phase flips. We
prove that their corresponding robustnesses
$\epsilon_1,\epsilon_2,\epsilon_3$ (under trivial precondition $Q=I, \lambda=0$) satisfy $\epsilon_2< \epsilon_1<\epsilon_3$.
In other words,  we conclude that an error correction scheme that is appropriate for
the error model can reduce noise in a program, while an inappropriate error correction scheme may do the opposite.

\item
Combining these ideas, we further analyze the robustness of a fault-tolerant version of QBF and demonstrate that the use of appropriate fault-tolerant gadgets makes QBF more robust. 
\end{itemize}

\vspace{1mm} \noindent \textbf{Organization.} We introduce
preliminaries about quantum information and the \quwhile-language in
\sec{preliminaries} and \sec{quantum-while-language} respectively. We
present the noisy \quwhile-language (syntax, semantics) in
\sec{noisy-progs} and define quantum robustness and a logic for
bounding it in \sec{robustness}.  We conclude the paper
with case studies in \sec{case-studies}. We discuss related work, next.

\section{Related Work}

\subsection{Reasoning about Errors in Classical Programs}

There has been significant work on reasoning about errors in classical software, which we describe here.
We believe that reasoning about errors is even more important in a quantum setting, where we can expect that errors will be prevalent and that the nature of errors will change rapidly as hardware progresses.

\subsubsection{Faulty Hardware}
One example of an error that a classical computer may experience is a \emph{transient hardware fault}.
Previous work has demonstrated that programming language techniques can help give safety guarantees even in the presence of such errors.
For example, \citet{Walker2006} present a type-theoretic framework for analyzing fault-tolerant lambda calculus in the presence of transient faults. 
Their type system guarantees that well-typed programs can tolerate any single data fault.
\citet{Perry2007} extend those ideas to typed assembly language.
Similar to this work, we present a semantics that tracks errors during computation.

\subsubsection{Program Continuity}
Our work is also similar to existing work on verifying program continuity and robustness \citep{Chaudhuri2010, Chaudhuri2011}.
In order to prove that a program is continuous or robust, it is necessary to show that the output of that program will not change significantly given small changes to the program's inputs.
We are interested in bounding the distance between the output of a noisy program and its corresponding ideal program given the same input. 

\subsubsection{Approximate and Probabilistic Computing} 
In recent years, there have been many advances in programming language support for approximate \citep{Sampson2011, Park2015, Boston2015, Baek2010, Carbin2012} and probabilistic \citep{Bornholt2014, Sampson2014} computing.
Both of these styles of computing rely on uncertainty during computation, either due to hardware errors or randomness, but still require that programs satisfy some correctness properties. 
Programming language tools in this area have aimed to provide these correctness guarantees.

\citet{CMR2013} present a tool for verifying \emph{reliability conditions},
which indicate the probability that a computation produces the
correct result. 
Given a hardware specification, which lists the probability with which
an operation executes correctly, their analysis computes a
conservative probability that a program value is computed correctly.
We extend this notion to a quantum setting by computing the \emph{robustness} of a quantum program, which relates to the probability that the state of the system after executing the program is correct.
Also, our hardware specification records not only the probability of
error, but also the nature of the errors that may occur (operator
$\Phi$). 
Doing so is more computationally expensive, but necessary to be able to reason about the
effects of error correction schemes in quantum
programs. \citet{CMR2013} does not consider error correction; instead,
errors are considered permanent and only probabilities of
errors are used for characterizations. 

\subsection{Characterizing Error in Quantum Programs}

To our knowledge, we are presenting the first semantics and logic for quantum computation that considers errors.
Existing work on characterizing error in quantum programs has focused primarily on dynamic approaches such as simulation \citep{Gutierrez2013} or physical experimentation \citep{Chuang1997, Knill2008, Emerson2005, Magesan2011}.
Resource estimation tools like QuRE can statically produce error estimates for algorithms given hardware specifications~\citep{Suchara2013}.
However, these tools are targeted at quantum \emph{circuits} (i.e. quantum programs without conditionals or loops)  and typically assume that a single error model and single type of error correction will be used throughout the circuit.

\section{Quantum Information: Preliminaries and Notations} \label{sec:preliminaries}

\begin{table}
  \caption{A brief summary of notation used in this paper}
  \label{tab:notation}
  \begin{tabular}{@{} lll @{}}\\
    \textbf{Hilbert Spaces:} & $\H$, $\A$ & \\
    \textbf{States:} & (pure states) & $\ket{\psi}, \ket{\phi}$
    (metavariables); $\ket{0}, \ket{1}, \ket{+},\ket{-}$ (notable states) \\
  & (density operators) & $\rho, \sigma$ (metavariables);
  $\ket{\psi}\bra{\psi}$ (as outer product)\\
    \textbf{Operations:} & (unitaries) & $U, V$ (metavariables); $H, X,
    Z$ (notable operations) \\
  & (superoperators) & $\E, \F$ (general); $\Phi$ (used to represent noise)\\
    \textbf{Measurements:} & $M$ & $\{M_m\}_m$ (general);
    $\{M_0=\ket{0}\bra{0}, M_1=\ket{1}\bra{1}\}$ (example)\\
  \end{tabular}
\end{table}

This section presents background and notation on quantum information
and quantum computation. For an extended background, we recommend notes
by~\citet{Wat06} and the textbook by~\citet{MI2002}. A summary of 
notation we use appears in Table~\ref{tab:notation}.   

\subsection{Preliminaries}

For any finite integer $n$, an $n$-dimensional Hilbert space $\H$
is essentially the space $\mathbb{C}^n$ of complex vectors.
We use Dirac's notation, $\ket{\psi}$, to denote a complex vector in $\mathbb{C}^n$. The inner product of two vectors $\ket{\psi}$ and $\ket{\phi}$ is denoted by $\langle\psi|\phi\rangle$,
which is the product of the Hermitian conjugate of $\ket{\psi}$, denoted by $\bra{\psi}$, and vector $\ket{\phi}$.
The norm of a vector $\ket{\psi}$ is denoted by $\nm{\ket{\psi}}=\sqrt{\langle\psi|\psi\rangle}$.

We define (linear) \emph{operators} as linear mappings between Hilbert spaces.
Operators between $n$-dimensional Hilbert spaces are represented by $n\times n$ matrices.
For example, the identity operator $I_\H$ can be identified by the identity matrix on $\H$.
The Hermitian conjugate of operator $A$ is denoted by $A^\dag$. Operator $A$ is \emph{Hermitian} if $A=A^\dag$.
The trace of an operator $A$ %on an $n$-dimensional Hilbert space
is the sum of the entries on the main diagonal, i.e., $\tr(A)=\sum_i A_{ii}$. %where $\{\ket{i}\}$ is any orthonormal basis.
We write $\bra{\psi}A\ket{\psi}$ to mean the inner product between
$\ket{\psi}$ and $A\ket{\psi}$.
A Hermitian operator $A$ is \emph{positive semidefinite} (resp.,
\emph{positive definite}) if for all vectors $\ket{\psi}\in\H$,
$\bra{\psi}A\ket{\psi}\geq 0$ (resp., $>0$).
This gives rise to the \emph{L\"owner order} $\sqsubseteq$ among operators:
\begin{equation}
 %\text{(L\"owner order)  }
 A\sqsubseteq B \text{ if } B-A \text{ is positive semidefinite, } \quad A\sqsubset B \text{ if } B-A \text{ is positive definite. }
\end{equation}

\subsection{Quantum States}

The state space of a quantum system is a Hilbert space. 
The state space of a \emph{qubit}, or quantum bit, is a 2-dimensional Hilbert space.
One important orthonormal basis of a qubit system is the \emph{computational} basis with $\ket{0}=(1,0)^\dag$ and $\ket{1}=(0,1)^\dag$, which encode the classical bits 0 and 1 respectively.  
Another important basis, called the $\pm$ basis, consists of $\ket{+}=\frac{1}{\sqrt{2}}(\ket{0}+\ket{1})$ and $\ket{-}=\frac{1}{\sqrt{2}}(\ket{0}-\ket{1})$.
The state space of multiple qubits is the \emph{tensor product} of single qubit state spaces.
For example, classical 00 can be encoded by $\ket{0}\otimes\ket{0}$
(written $\ket{0}\ket{0}$ or even $\ket{00}$ for short) in the Hilbert space $\mathbb{C}^2\otimes\mathbb{C}^2$. 
The Hilbert space for an $m$-qubit system is $(\mathbb{C}^2)^{\otimes m} \cong \mathbb{C}^{2^m}$.

A \emph{pure} quantum state is represented by a unit vector, i.e., a vector $\ket{\psi}$ with $\nm{\ket{\psi}}=1$.
A \emph{mixed} state can be represented by a classical distribution over an ensemble of pure states $\{(p_i,\ket{\psi_i})\}_i$,
i.e., the system is in state $\ket{\psi_i}$ with probability $p_i$.
One can also use \emph{density operators} to represent both pure and mixed quantum states.
A density operator $\rho$ for a mixed state representing the ensemble $\{(p_i,\ket{\psi_i})\}_i$ is a positive semidefinite operator $\rho=\sum_i p_i\ket{\psi_i}\bra{\psi_i}$, where $\ket{\psi_i}\bra{\psi_i}$ is the outer-product of $\ket{\psi_i}$; in particular, a pure state $\ket{\psi}$ can be identified with the density operator $\rho=\ket{\psi}\bra{\psi}$.
Note that $\tr(\rho)=1$ holds for all density operators. A positive semidefinite operator $\rho$ on $\H$ is said to be a \emph{partial} density operator if $\tr(\rho)\leq 1$.
The set of partial density operators is denoted by $\D(\H)$.
%We will denote by $\D(\H)$ the set of all partial density matrices on $\H$.

\subsection{Quantum Operations} 

Operations on quantum systems can be characterized by unitary operators. An operator $U$ is \emph{unitary} if its Hermitian conjugate is its own inverse, i.e., $U^\dag U=UU^\dag=I$. For a pure state $\ket{\psi}$, a unitary operator describes an \emph{evolution} from $\ket{\psi}$ to $U\ket{\psi}$. For a density operator $\rho$, the corresponding evolution is $\rho \mapsto U\rho U^\dag$. Common single-qubit unitary operators include
\begin{align}
  H=\frac{1}{\sqrt{2}}
  \left[ {\begin{array}{cc}
   1 & 1 \\
   1 & -1 \\
  \end{array} } \right],
  \quad
  X=
  \left[ {\begin{array}{cc}
   0 & 1 \\
   1 & 0 \\
  \end{array} } \right],
  \quad
  Z=
  \left[ {\begin{array}{cc}
   1 & 0 \\
   0 & -1 \\
  \end{array} } \right].
\end{align}
The \emph{Hadamard} operator $H$ transforms between the computational and the $\pm$ basis. For example, $H\ket{0}=\ket{+}$ and $H\ket{1}=\ket{-}$.
The \emph{Pauli $X$} operator is a bit flip, i.e., $X\ket{0}=\ket{1}$ and  $X\ket{1}=\ket{0}$. The \emph{Pauli $Z$} operator is a phase flip, i.e., $Z\ket{0}=\ket{0}$ and $Z\ket{1}=- \ket{1}$.

More generally, the evolution of a quantum system can be characterized by an \emph{admissible superoperator} $\E$, which is a \emph{completely-positive} and \emph{trace-non-increasing} linear map from $\D(\H)$ to $\D(\H')$ for Hilbert spaces $\H, \H'$.
A superoperator is positive if it maps from $\D(\H)$ to $\D(\H')$ for Hilbert spaces $\H, \H'$.
A superoperator $\E$ is $k$-positive if for any $k$-dimensional Hilbert space $\A$, the superoperator $\E\otimes I_\A$ is a positive map on $\D(\H\otimes\A)$. 
A superoperator is said to be completely positive if it is $k$-positive for any positive integer $k$.
A superoperator $\E$ is trace-non-increasing if for any initial state $\rho\in\D(\H)$, the final state $\E(\rho)\in \D(\H')$ after applying $\E$ satisfies $\tr(\E(\rho))\leq\tr(\rho)$.

For every superoperator $\E : \D(\H)\to\D(\H')$, there exists a set of Kraus operators $\{E_k\}_k$ such that $\E(\rho)=\sum_k E_k\rho E_k^\dag$ for any input $\rho\in\D(\H)$.
Note that the set of Kraus operators is finite if the Hilbert space is finite-dimensional.
The \emph{Kraus form} of $\E$ is written as $\E=\sum_k E_k\circ E_k^\dag$.
A unitary evolution can be represented by the superoperator $\E=U\circ U^\dag$. An identity operation refers to the superoperator $\mathcal{I}_{\H} = I_{\H} \circ I_{\H}$.
The Schr\"odinger-Heisenberg \emph{dual} of a superoperator $\E=\sum_k E_k\circ E_k^\dag$, denoted by $\E^*$, is defined as follows: for every state $\rho\in\D(\H)$ and any operator $A$, $\tr(A\E(\rho))=\tr(\E^*(A)\rho)$. The Kraus form of $\E^*$ is $\sum_k E_k^\dag\circ E_k$.

\subsection{Quantum Measurements} 

The way to extract information about a quantum system is called
a quantum \emph{measurement}. 
A quantum measurement on a system over Hilbert space $\H$ can be
described by a set of linear operators 
$\{M_m\}_m$ with $\sum_m M_m^\dag M_m=I_\H$. 
If we perform a measurement $\{M_m\}$ on a state $\rho$, the outcome $m$ is observed with probability $p_m=\tr(M_m\rho M_m^\dag)$ for each $m$.
A major difference between classical and quantum computation is that a
quantum measurement changes the state. In particular, after a
measurement yielding outcome $m$, the state collapses to $M_m\rho M_m^\dag/p_m$.
For example, a measurement in the computational basis is described by $M=\{M_0=\ket{0}\bra{0}, M_1=\ket{1}\bra{1}\}$.
If we perform the computational basis measurement $M$ on state $\rho=\ket{+}\bra{+}$, then with probability $\frac{1}{2}$ the outcome is $0$ and $\rho$ becomes $\ket{0}\bra{0}$.
With probability $\frac{1}{2}$ the outcome is $1$ and $\rho$ becomes $\ket{1}\bra{1}$.

\section{Quantum Programs}
\label{sec:quantum-while-language}
Our work builds on top of the \quwhile-language developed by \citet{Yin11,Yin16}.
Here we review the syntax and semantics of this language.

\subsection{Syntax}
\label{sec:syntax}

Define $\mathit{Var}$ as the set of quantum variables. 
We use the symbol $q$ as a metavariable ranging over quantum variables and define a \emph{quantum register} $\overline{q}$ to be a finite set of distinct variables.
For each $q\in\mathit{Var}$, its state space is denoted by $\mathcal{H}_q$.
The quantum register $\overline{q}$ is associated with the Hilbert space $\H_{\overline{q}}=\bigotimes_{q\in\overline{q}}\H_q$.
If $type(q) =$ \textbf{Bool} then $\H_q$ is the two-dimensional Hilbert space with basis $\{\ket{0}, \ket{1}\}$.
If $type(q) =$ \textbf{Int} then $\H_q$ is the Hilbert space with basis $\{\ket{n} : n \in \mathbb{Z}\}$.
The syntax of a \emph{\quwhile ~ program} $P$ is defined as follows.
\begin{align}
 P \enskip ::= & \quad \cskip 
                \enskip | \enskip q:=\ket{0} 
                \enskip | \enskip \overline{q}:=U[\overline{q}] 
                \enskip | \enskip P_1;P_2 \enskip | \nonumber  \\
              & \quad \qif{M[\overline{q}]=\overline{m\to P_m}}
                \enskip | \enskip \qwhile{M[\overline{q}]=1}{P_1}
\end{align}
The language constructs above are similar to their classical counterparts.
(1) $\cskip$ does nothing.
(2) $q:=\ket{0}$ sets quantum variable $q$ to the basis state $\ket{0}$.
(3) $\overline{q}:=U[\overline{q}]$ applies the unitary $U$ to the qubits in $\overline{q}$.
(4) Sequencing has the same behavior as its classical counterpart.
(5) $\qif{M[\overline{q}]=\overline{m \to P_m}}$ performs the measurement $M = \{M_m\}$ on the qubits in $\overline{q}$, and executes program $P_m$ if the outcome of the measurement is $m$.
The bar over $\overline{m \to P_m}$ indicates that there may be one or more repetitions of this expression.
\footnote{Our syntax for conditional/case statements differs from that presented by \citet{Yin16} to make it more clear that there are multiple programs $P_m$.}
(6) $\qwhile{M[\overline{q}]=1}{P_1}$ performs the measurement $M = \{M_0, M_1\}$ on the qubits in $\overline{q}$, and executes $P_1$ if measurement produces the outcome corresponding to $M_1$ or terminates if measurement produces the outcome corresponding to $M_0$.

We highlight two differences between quantum and classical while languages:
(1) Qubits may only be initialized to the basis state $\ket{0}$. There is no quantum analogue for initialization to any expression (i.e. $x:=e$) because of the no-cloning theorem of quantum states.
Any state $\ket{\psi} \in \H_q$, however, can be constructed by applying some unitary $U$ to $\ket{0}$.
\footnote{In our examples, we may write $q:=\ket{\psi}$ for some fixed basis state $\ket{\psi}$. 
What we mean in this case is $q:=\ket{0}; q:=U[q]$ where $U$ is the unitary operation that transforms $\ket{0}$ into $\ket{\psi}$.}
(2) Evaluating the guard of a case statement or loop, which performs a measurement, potentially disturbs the state of the system.

We now present an example program written in the \quwhile-language syntax.
The \emph{quantum walk} \citep{AAKV2001} is a widely considered example in quantum programming, quantum algorithms, and quantum simulation literature \citep{GAN2014,YYW17}.
Here we consider a quantum walk on a circle with $n$ points.
We let the initial position of the walker be $0$, and say that the program halts if and only if the walker arrives at position $1$.

\begin{example}[Quantum Walk]
Define the coin (or "direction") space $\mathcal{H}_c$ to be the $2$-dimensional Hilbert space with orthonormal basis states $\ket{L}$ and $\ket{R}$, for \emph{Left} and \emph{Right} respectively.
Define the position space $\mathcal{H}_p$ to be the $n$-dimensional Hilbert space with orthonormal basis states $\ket{0}, \ket{1}, ..., \ket{n-1}$, where vector $\ket{i}$ represents position $i$ for $0 \leq i < n$.
Now the state space of the walk is $\mathcal{H} = \mathcal{H}_c \otimes \mathcal{H}_p$ and the initial state is $\ket{L}\ket{0}$.
In each step of the walk:
\begin{enumerate}
\item Measure the position of the system to determine whether the walker has reached position $1$.
If the walker has reached position $1$, the walk terminates.
Otherwise, it continues.
We use the measurement $M = \{\ket{1}\bra{1}, \sum_{i \neq 1}\ket{i}\bra{i}\}$.
\item Apply the ``coin-tossing'' operator $H$ to the coin space $\mathcal{H}_c$.
\item Perform the shift operator $S$ defined by
$S\ket{L,i} = \ket{L,i - 1 (\text{mod}~n)},~~~S\ket{R,i} = \ket{R,i + 1 (\text{mod}~n)}$
for $i = 0,1,...,n-1$ to the space $\mathcal{H}$.
The $S$ operator can be written as
\[ S = \sum_{i=0}^{n-1}\ket{L}\bra{L} \otimes \ket{i - 1 (\text{mod}~n)}\bra{i} + \sum_{i=0}^{n-1}\ket{R}\bra{R} \otimes \ket{i + 1 (\text{mod}~n)}\bra{i}.\]
\end{enumerate}
In this algorithm, the walker takes one step left or one step right corresponding to the coin flip result $\ket{L}$ or $\ket{R}$.
However, unlike the classical case, the result of the coin flip may be a superposition of $\ket{L}$ and $\ket{R}$, allowing the walker to take a step to the left and right \emph{simultaneously}.
This quantum walk can be described by the following program
\begin{equation} \label{eqn:QW_N}
 QW_n\equiv p:=\ket{0};c:=\ket{L};\qwhile{M[p] = 1}{c:=H[c]; c,p:=S[c,p]}.
\end{equation}
\end{example}

\begin{figure}

  \begin{subfigure}[b]{\textwidth}
    \begin{align*}
      \text{(Skip)}& & &\infer[]{\StepsTo{\cskip}{E}{\rho}}{} \\
      \text{(Initialization)}& & &\infer[]{\StepsTo{q:=\ket{0}}{E}{\rho_0^q}}{}\\
      & & &\text{where}~\rho_0^q = \begin{cases}
                   \E_{q \rightarrow 0}^{\mathrm{bool}}(\rho)  & \text{if $type(q)$ = \textbf{Bool}}\\
                   \E_{q \rightarrow 0}^{\mathrm{int}}(\rho) & \text{if $type(q)$ = \textbf{Int}}
              \end{cases}  \\
      \text{(Unitary)}& & &\infer[]{\StepsTo{\overline{q}:=U[\overline{q}]}{E}{U\rho U^\dagger}}{}\\
      \text{(Sequence E)}& & &\infer[]{\StepsTo{E;P_2}{P_2}{\rho}}{}\\
      \text{(Sequence)}& & &\infer[]{\StepsTo{P_1;P_2}{P_1';P_2}{\rho'}}{\StepsTo{P_1}{P_1'}{\rho'}}\\
      \text{(Case $m$)}& & &\infer[]{\StepsTo{\qif{M[\overline{q}]=\overline{m\to P_m}}}{P_m}{M_m \rho M_m^\dagger}}{}\\
      & & &\text{for each outcome $m$ of measurement $M = \{ M_m \}$}\\
      \text{(While 0)}& & &\infer[]{\StepsTo{\qwhile{M[\overline{q}]=1}{P_1}}{E}{M_0 \rho M_0^\dagger}}{} \\
      \text{(While 1)}& & &\infer[]{\StepsTo{\qwhile{M[\overline{q}]=1}{P_1}}{P_1; \qwhile{M[\overline{q}]=1}{P_1}}{M_1 \rho M_1^\dagger}}{}
    \end{align*}
    \caption{}
    \label{fig:opsem}
  \end{subfigure}

  \begin{subfigure}[b]{\textwidth}
    \begin{align*}
      \begin{array}{lll}
        \sem{\cskip}\rho & = & \rho \\
        \sem{q:=\ket{0}}\rho & = & \begin{cases}
             \E_{q \rightarrow 0}^{\mathrm{bool}}(\rho)& \text{if $type(q)$ = \textbf{Bool}}\\
            \E_{q \rightarrow 0}^{\mathrm{int}}(\rho)  & \text{if $type(q)$ = \textbf{Int}}
        \end{cases} \\
        \sem{\overline{q}:= U[\overline{q}]}\rho & = & U\rho U^\dag \\
        \sem{P_1;P_2}\rho & = & \sem{P_2}(\sem{P_1}\rho) \\
        \sem{\qif{M[\overline{q}]=\overline{m\to P_m}}}\rho & = & \sum_m \sem{P_m}(M_m\rho M_m^\dag) \\
        \sem{\qwhile{M[\overline{q}]=1}{P_1}}\rho & = & \bigsqcup_{k=0}^{\infty}\sem{\mathbf{while}^{(k)}}\rho
      \end{array}
    \end{align*}
    \caption{}
    \label{fig:desem}
  \end{subfigure}

  \caption{
  \quwhile ~ programs: (a) operational semantics
  (b) denotational semantics.}
\end{figure}

\subsection{Operational Semantics}

The \emph{operational} semantics of the \quwhile-language are presented in \fig{opsem}.
$\StepsTo{P}{P'}{\rho'}$, where $\langle P,~\rho \rangle$ and $\langle P',~\rho' \rangle$ are quantum \emph{configurations}. 
In configurations, $P$ (or $P'$) could be a quantum program or the
empty program $E$, and $\rho$ and $\rho'$ are partial density
operators representing the current state. 
Intuitively, in one step, we can evaluate program $P$ on input state $\rho$ to program $P'$ (or $E$) and output state $\rho'$.
In order to present the rules in a non-probabilistic manner, the probabilities associated with each transition are encoded in the output partial density operator.\footnote{If we had instead considered a probabilistic transition system, then
the transition rule for case statements could have been written as
$\StepsToP{p_m}{\qif{M[\overline{q}]=\overline{m\to P_m}}}{\rho}{P_m}{\rho_m}$
where $p_m = \tr(M_m\rho M_m^\dagger)$ and $\rho_m = M_m\rho M_m^\dagger / p_m$.}

In the \emph{Initialization} rule, the superoperators  $\E_{q \rightarrow 0}^{\mathrm{bool}}(\rho)$ and $\E_{q \rightarrow 0}^{\mathrm{int}}(\rho)$, which initialize the variable $q$ in $\rho$ to $\ket{0}\bra{0}$, are defined by 
$\E_{q \rightarrow 0}^{\mathrm{bool}}(\rho) = \ket{0}_q\bra{0}\rho\ket{0}_q\bra{0} + \ket{0}_q\bra{1}\rho\ket{1}_q\bra{0}$ and  $\E_{q \rightarrow 0}^{\mathrm{int}}(\rho)=\sum_{n=-\infty}^{\infty}\ket{0}_q\bra{n}\rho\ket{n}_q\bra{0}$. 
Here, $\ket{\psi}_q\bra{\phi}$ denotes the outer product of states $\ket{\psi}$ and $\ket{\phi}$ associated with variable $q$; that is, $\ket{\psi}$ and $\ket{\phi}$ are in $\H_q$ and $\ket{\psi}_q\bra{\phi}$ is a matrix over $\H_q$. 
It is a convention in the quantum information literature that when operations or measurements only apply to part of the quantum system (e.g., a subset of quantum variables of the program), one should assume that an identity operation is applied to the rest of quantum variables.  
For example, applying $\ket{\psi}_q\bra{\phi}$ to $\rho$ means applying $\ket{\psi}_q\bra{\phi}\otimes I_{\H_{\bar{q}}}$ to $\rho$, where $\bar{q}$ denotes the set of all variables except $q$. 
The identity operation is usually omitted for simplicity. 

We do not explain the rules in detail, but hope their meaning is
self-evident given the description of the language in \sec{syntax}.

\begin{figure}
  \centering
  \includegraphics[trim={8.5cm 5.5cm 1.5cm 4.5cm},clip,width=7.5cm]{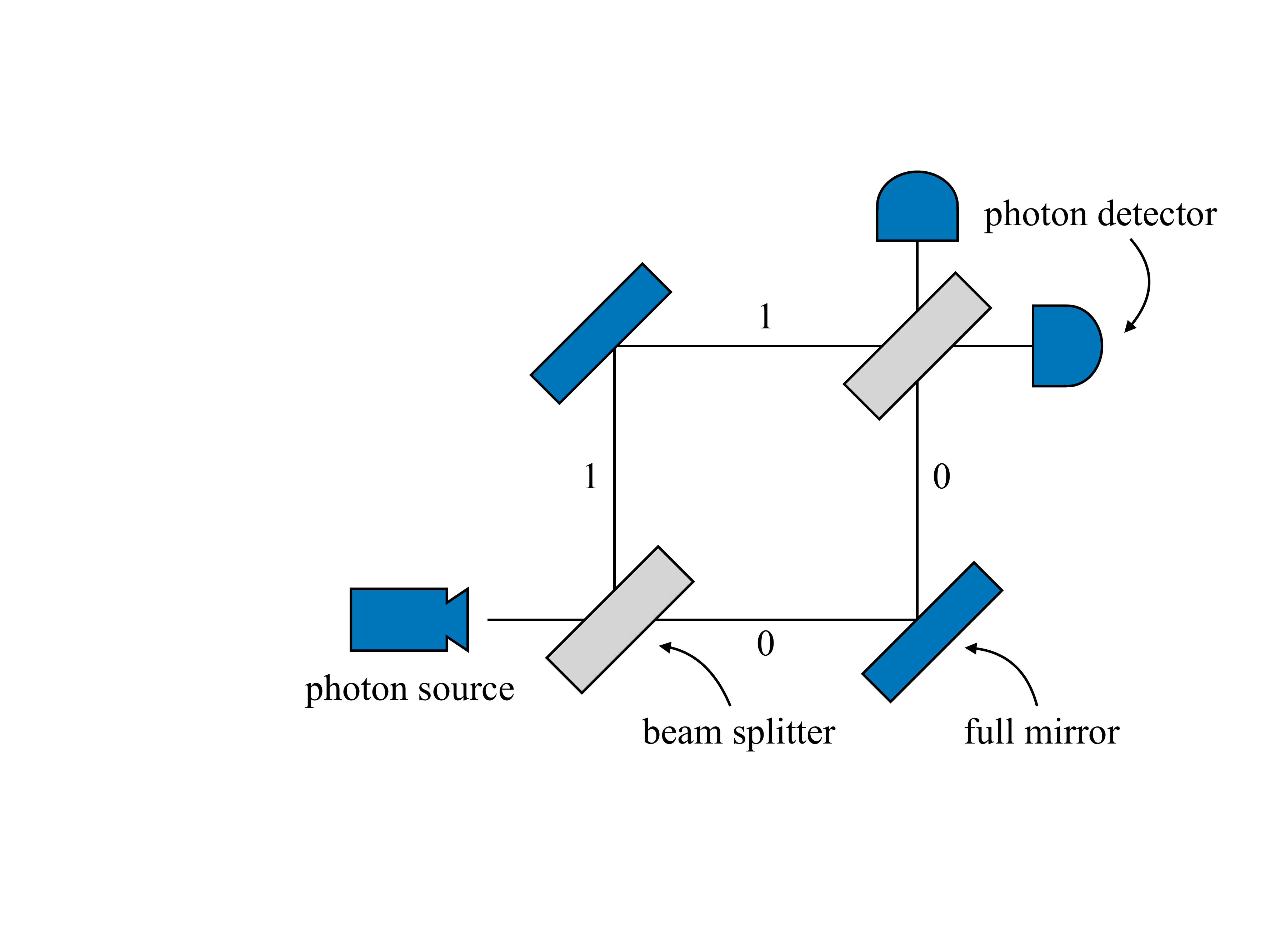}
  \caption{The beam splitter experiment.
  The `0' path corresponds to the photon having been transmitted at the first beam splitter, and the `1' path corresponds to the photon having been reflected at the first beam splitter.}
  \label{fig:beam-splitter}
\end{figure}

To illustrate the use of the operational semantics, we consider the classic \emph{beam splitter experiment}, which demonstrates the difference between quantum and classical mechanics.
In this experiment, as shown in \fig{beam-splitter}, a photon source sends photons through two beam splitters.
The final locations of the photons are determined using photon detectors.
A classical analysis would assume that, for each photon, each beam splitter flips a fair coin to either reflect the photon or allow it to pass through.
The full mirrors reflect incoming photons with probability one.
As a result, we might expect half of the photons to reach one detector and half to reach the other.
However, this is not what is observed in experiments. On the contrary, all photons will reach one detector.

\begin{example}[Beam Splitter Experiment]
  Model each beam splitter as a Hadamard gate and say that the photon source produces photons on the `0' path, which corresponds to the $\ket{0}$ state.
  Then the beam splitter experiment (BSE) corresponds to the program
  \begin{equation} \label{eqn:BSE}
     BSE \equiv q_1:=\ket{0}; q_1:=H[q_1]; q_1:=H[q_1]
  \end{equation}
  where $type(q_1)$ = \textbf{Bool}.
  Let $\rho = \ket{1}_{q_1}\bra{1}$.
  Then the evaluation of program $BSE$ on input $\rho$ proceeds as follows.
  \begin{align*}
    \langle BSE, \rho \rangle
     &= \langle q_1:=\ket{0}; q_1:=H[q_1]; q_1:=H[q_1],~~~~\ket{1}_{q_1}\bra{1} \rangle\\
     &\to \langle q_1:=H[q_1]; q_1:=H[q_1],~~~~\ket{0}_{q_1}\bra{0} \rangle \\
     &\to \langle q_1:=H[q_1],~~~~\ket{+}_{q_1}\bra{+} \rangle\\
     &\to \langle E,~~~~\ket{0}_{q_1}\bra{0} \rangle
  \end{align*}
  At the first beam splitter, the photons may either continue on their current path (corresponding to $\ket{0}$)
  or be reflected to the `1' path (corresponding to $\ket{1}$).
  In quantum mechanics, both possibilities happen simultaneously, resulting in each photon continuing on a superposition of both paths.
  The superposition of both paths corresponds to the state $\ket{+} = \frac{1}{\sqrt{2}}(\ket{0} + \ket{1})$.
  At the second beam splitter, the paths in superposition interfere with each other, resulting in the `1' path being cancelled.
\end{example}

\subsection{Denotational Semantics}
The \emph{denotational} semantics of a  \quwhile ~ program is given in
\fig{desem}. It defines $\sem{P}$ as a superoperator that acts on
$\rho \in \H_\mathit{Var}$~\cite{Yin16}. The semantics of each term is
given compositionally. We write $\mathbf{while}^{(k)}$ for the kth
syntactic approximation (i.e., unrolling) of $\mathbf{while}$ and
$\bigsqcup$ for the \emph{least upper bound} operator in the complete
partial order generated by L\"{o}wner comparison. 
For more detail on the semantics of loops, we refer the reader to \citet{Yin11,Yin16}.

We connect the denotational semantics to the operational
semantics through the following proposition.

\begin{proposition}[\cite{Yin16}]
For any program $P$
\begin{equation} \label{eq:denotational}
 \sem{P}\rho \equiv \sum\{ \vert \rho' : \langle P, \rho \rangle \rightarrow^* \langle E, \rho' \rangle \vert \},
\end{equation}
where $\rightarrow^*$ is the reflexive, transitive closure of
$\rightarrow$ and $\{ \vert \cdot \vert \}$ denotes a multi-set. 
\end{proposition}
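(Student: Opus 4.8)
The plan is to proceed by structural induction on the program $P$, showing at each step that the denotational value $\sem{P}\rho$ coincides with the multiset sum of final states reachable operationally. Throughout I regard the transition relation $\rightarrow$ as \emph{branching}: it is deterministic except at measurement steps (the \textbf{Case} and \textbf{While 1}/\textbf{While 0} rules), where a configuration splits into one successor per outcome $m$, each carrying the post-measurement state $M_m \rho M_m^\dagger$. The multiset $\{\vert \rho' : \langle P, \rho \rangle \rightarrow^* \langle E, \rho' \rangle \vert\}$ thus collects one entry per terminating branch, and its sum is exactly what $\sem{P}$ is meant to compute.

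For the base cases $\cskip$, $q := \ket{0}$, and $\overline{q} := U[\overline{q}]$, each reaches $\langle E, \rho' \rangle$ in a single step, and the output $\rho'$ in the operational rule is by construction the same state the corresponding line of \fig{desem} assigns to $\sem{P}\rho$; the sum is over a singleton multiset and the equality is immediate. For sequencing $P_1; P_2$ I would first establish a decomposition lemma: a terminating computation $\langle P_1; P_2, \rho \rangle \rightarrow^* \langle E, \rho' \rangle$ factors, along any fixed branch, as a terminating computation of $P_1$ producing some intermediate $\sigma$ (by the \textbf{Sequence} rule), followed at the \textbf{Sequence E} step by a terminating computation of $P_2$ on $\sigma$. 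Summing over branches and applying the induction hypothesis to $P_1$ and then to $P_2$ (using that each $\sem{\cdot}$ is a \emph{linear} superoperator, so the sum over intermediate states passes through) yields $\sem{P_2}(\sem{P_1}\rho)$, matching \fig{desem}. The case statement is analogous: the first step branches on $m$ into $\langle P_m,~ M_m \rho M_m^\dagger \rangle$, and the induction hypothesis applied to each $P_m$, summed over $m$, reproduces $\sum_m \sem{P_m}(M_m \rho M_m^\dag)$.

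The crux is the loop $\qwhile{M[\overline{q}]=1}{P_1}$. Writing $\mathbf{while}^{(k)}$ for the $k$-th syntactic unrolling (which aborts, contributing $0$, if still in the loop after $k$ iterations), I would show that the finite approximation $\sem{\mathbf{while}^{(k)}}\rho$ equals the partial multiset sum over exactly those terminating operational paths that exit the loop within $k$ iterations. This follows by an inner induction on $k$: one unrolling is a case statement whose branches are $\cskip$ (on outcome $0$, state $M_0 \rho M_0^\dag$) and $P_1; \mathbf{while}^{(k-1)}$ (on outcome $1$, state $M_1 \rho M_1^\dag$), so the already-handled sequencing and case arguments let me peel off a single iteration and appeal to the $(k-1)$-approximation. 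Since every terminating operational computation exits after some finite number of iterations, the full operational multiset is the directed union of these truncations, and its sum is the supremum, under the L\"owner order, of the increasing chain $\{\sem{\mathbf{while}^{(k)}}\rho\}_k$, which is by definition $\bigsqcup_{k=0}^{\infty}\sem{\mathbf{while}^{(k)}}\rho$.

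The main obstacle is this last limiting step. I must verify that the chain $\{\sem{\mathbf{while}^{(k)}}\rho\}_k$ is genuinely L\"owner-monotone and trace-bounded in the CPO $\D(\H_\mathit{Var})$, so that the least upper bound exists and coincides with the ordinary sum of the countably many terminating branches, and that no probability mass ``escapes to infinity'' — that is, that the contribution of the nonterminating residue vanishes in the limit, so restricting the multiset to terminating branches loses nothing. Establishing monotonicity and boundedness of the partial sums, and invoking continuity of $\sem{P_1}$ and of the measurement maps to commute the supremum with the per-iteration transformations, is the delicate part; the atomic and compositional cases are routine by comparison.
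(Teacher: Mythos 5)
This proposition is stated in the paper without proof---it is imported verbatim from \citet{Yin16}---and your structural induction, with the loop handled by identifying $\sem{\mathbf{while}^{(k)}}\rho$ with the partial sum over branches exiting within $k$ iterations and passing to the least upper bound, is essentially the standard argument given in that reference. One small remark: your final worry about nonterminating mass ``escaping to infinity'' is moot, since both sides of the equation simply omit the nonterminating residue (each is a \emph{partial} density operator, possibly of trace less than one), so nothing needs to vanish for the identification of the lub with the countable sum over terminating branches to go through.
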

In short, the meaning of running program $P$ on
input state $\rho$ is the sum of all possible output states, weighted
by their probabilities.

The semantics presented so far assume that no noise will occur during computation.
In \sec{noisy-progs}, we extend the semantics to include possible errors that may occur during unitary application. 

\subsection{Quantum Predicates and Hoare Logic} \label{sec:qpredicate}

A \emph{quantum predicate}  is a Hermitian operator $M$ such that $0 \sqsubseteq M \sqsubseteq I$~\citep{DP2006}. 
For a predicate $M$ and state $\rho$,  $\tr(M\rho)$ is the expectation of the truth value of predicate $M$ in state $\rho$. 
Restricting $M$ to be between $0$ and $I$ ensures that $0 \leq \tr(M\rho) \leq 1$ for any $\rho\in\D(\H)$.

The identity matrix corresponds to the \textbf{true} predicate because for any density operator $\rho$, $\tr(I\rho) = 1$.
The zero matrix corresponds to the \textbf{false} predicate because for any density operator $\rho$, $\tr(0\rho) = 0$.
$\ket{0}\bra{0}$ is the predicate that says that a state is in the subspace spanned by $\ket{0}$.
As an example, the density operator $\rho_0$ corresponding to the state $\ket{0}$ is such that $\tr(\ket{0}\bra{0}\rho_0) = 1$,
and the density operator $\rho_1$ corresponding to the state $\sqrt{1/3}\ket{0} + \sqrt{2/3}\ket{1}$ is such that $\tr(\ket{0}\bra{0}\rho_1) = \frac{1}{3}$.

\citet{Yin11,Yin16} uses quantum predicates as the basis for defining quantum preconditions and postconditions in his quantum Hoare logic.
Let $M$ and $N$ be quantum predicates and let $P$ be a \quwhile ~ program.
Then $M$ is a precondition of $N$ with respect to $P$, written $\{M\}P \{N\}$, if
\begin{equation} \label{eqn:predicate}
\forall \rho.~ \tr(M\rho) \leq \tr(N\sem{P}\rho).
\end{equation}
This inequality can be seen as the probabilistic version of the following statement:
if state $\rho$ satisfies predicate $M$, then after applying the program $P$ the resulting state will satisfy predicate $N$.
If we include an auxiliary space $\A$, then the equivalent statement is
\begin{equation} \label{eqn:predicate-aux}
\forall \rho.~ \tr((M \otimes I_\A)\rho) \leq \tr((N \otimes I_\A)(\sem{P} \otimes I_\A) (\rho)).
\end{equation}

\section{Noisy quantum programs} \label{sec:noisy-progs} In this
section, we present the syntax and semantics for the \quwhile-language
with noise, as an extension of the \quwhile-language.  
Our syntax allows one to explicitly
encode any error model that describes local noise during the execution
of a quantum program.

\subsection{Noise in Quantum Computation}
Here we briefly discuss how noise is modeled in the study of quantum error correction and fault-tolerant quantum computation~\citep{Got10}, which in turn comes from the noise model in quantum physical experiments. 
It is a convention to only consider ``local'' noise rather than correlated noise, because benign white noise is more likely than ``adversarial'' noise in actual quantum devices. 
A few types of natural local noise arise in realistic quantum systems, which generalize classical bit-flip errors, including: 
\begin{itemize}
 \item The \emph{bit flip} noise flips the state with probability $p$, and can be represented by
 \begin{equation}
     \Phi_{p, \mathrm{bit}} = (1-p) I \circ I + p X \circ X.
 \end{equation}
 \item The \emph{phase flip} noise flips the phase with probability $p$, and can be represented by
  \begin{equation}
     \Phi_{p, \mathrm{phase}} = (1-p) I \circ I + p Z \circ Z.
 \end{equation}
\end{itemize}
Other types of noise include depolarization, amplitude damping, and phase damping~\citep{MI2002}.  

This model of noise is used by experimental physicists for building
and benchmarking quantum devices in both academia and industry~\citep{RevModPhys.87.307}. Noisy information of specific quantum devices can also be publicly available (e.g., the IBM Q-experience\footnote{\url{https://www.research.ibm.com/ibm-q/technology/devices/}.}). 

\subsection{Syntax}
The syntax of a noisy quantum program $\widetilde{P}$ is defined as follows
\begin{align}
 \widetilde{P} \enskip ::= & \quad \cskip 
                           \enskip | \enskip q:=\ket{0} 
                           \enskip | \enskip \overline{q}:\cong_{p,\Phi} U[\overline{q}]
                           \enskip | \enskip \widetilde{P_1};\widetilde{P_2} \enskip | \nonumber  \\
                           & \quad \qif{M[\overline{q}]=\overline{m\to \widetilde{P_m}}} 
                           \enskip | \enskip \qwhile{M[\overline{q}]=1}{\widetilde{P_1}}
\end{align}
This syntax is identical to that of the standard quantum while language described in \sec{quantum-while-language}, except that we have annotated the unitary application construct with an error probability $p$ and an error model $\Phi$, which is the superoperator of the noisy operation. 
The statement $\overline{q}:\cong_{p,\Phi} U[\overline{q}]$ will apply the correct operation $U$ on $\overline{q}$ with probability $1-p$ and will apply the noisy (or erroneous)  operation $\Phi$ on $\overline{q}$ with probability $p$. The nature of $\Phi$ will depend on the underlying hardware, and is a parameter to our language.

For any noisy program $\widetilde{P}$, its corresponding \emph{ideal}
program can be obtained by simply replacing any noisy unitary
operations by their ideal versions (i.e., we ignore $p$ and
$\Phi$). We write $\ideal(\widetilde{P})$ for this program, or simply
$P$ when there is no ambiguity.

We remark that noisy unitary operations are already expressive enough to capture many types of noise. First, any noise can depend on the quantum state of the system by the nature of modeling it as a quantum operation $\Phi$. 
Second, noisy initialization can be modeled as initialization followed by application of a noisy identity operation,
and noisy measurement can be modeled as application of a noisy identity operation followed by measurement.
Third, errors that occur between applications of subsequent unitaries can also be modeled by noisy identity operations.

\subsection{Semantics}

The operational and denotational semantics of noisy \quwhile ~ programs are also identical to those of the standard \quwhile ~ programs,
except that the rules related to unitary application now include an error term, as shown
in \fig{noisy-opsem} and \fig{noisy-desem}.
Note that we do not require $p$ and $\Phi$ to be the same in every instance of a unitary application. 
They may depend on the type of unitary being applied, or on other features of the program such as the number of operations performed so far.
We use \emph{explicit} characterizations of the noise given by $\Phi$ to enable us to argue about the effect of error-correcting gadgets explicitly written in the program. 
If we only considered error probability (like~\citet{CMR2013}) then we could only reason about error as accumulating throughout the program, and we would not be able to show that error-correcting gadgets reduce noise by cancelling previous errors. 

%%%% Something for ourselves to remember !!! 
%\xw{different $p$ and $\Phi$ for different $U$ is simple; harder case is $p, \Phi$ could depend on the variable in the program}

\begin{figure}

  \begin{subfigure}[b]{\textwidth}
    \begin{align*}
      \text{(Unitary-Noisy)}& & &\infer[]{\StepsTo{\overline{q}:\cong_{p,\Phi}U[\overline{q}]}{E}{(1-p)U\rho U^\dagger + p\,\Phi(\rho)}}{}
    \end{align*}
    \caption{}
    \label{fig:noisy-opsem}
  \end{subfigure}

  \begin{subfigure}[b]{\textwidth}
    \begin{align*}
      \sem{\overline{q}:\cong_{p,\Phi} U[\overline{q}]}\rho = (1-p)U\rho U^\dag+p\,\Phi(\rho)
    \end{align*}
    \caption{}
    \label{fig:noisy-desem}
  \end{subfigure}

  \caption{
  (a) Transition rule for noisy unitary application.
  (b) Denotation of noisy unitary application.}
  \label{fig:noisy}
\end{figure}

\begin{example}[Beam Splitter Experiment with Errors]
  Consider the following noisy version of the beam splitter program.
  \[ \widetilde{BSE} \equiv q_1:=\ket{0}; q_1:\cong_{p, \Phi}H[q_1]; q_1:\cong_{p, \Phi}H[q_1]. \]
  Note that the error probability $p$ and error model $\Phi$ are the same in both applications of $H$.
  Let $\rho = \ket{1}_{q_1}\bra{1}$.
  Then the evaluation of program $\widetilde{P}$ on input $\rho$ proceeds as follows.
  \begin{align*}
    \langle \widetilde{BSE}, \rho \rangle
     &= \langle q_1:=\ket{0}; q_1:\cong_{p, \Phi}H[q_1]; q_1:\cong_{p, \Phi}H[q_1],~~~~\ket{1}_{q_1}\bra{1} \rangle\\
     &\to \langle q_1:\cong_{p, \Phi}H[q_1]; q_1:\cong_{p, \Phi}H[q_1],~~~~\ket{0}_{q_1}\bra{0} \rangle \\
     &\to \langle q_1:\cong_{p, \Phi}H[q_1],~~~~(1-p)\ket{+}_{q_1}\bra{+} + p\Phi(\ket{0}_{q_1}\bra{0}) \rangle\\
     &\to \langle E,~~~~(1-p)^2\ket{0}_{q_1}\bra{0} + p(1-p)H\Phi(\ket{0}_{q_1}\bra{0})H + p \Phi(\rho_1) \rangle
  \end{align*}
  where $\rho_1 = (1-p)\ket{+}_{q_1}\bra{+} + p\Phi(\ket{0}_{q_1}\bra{0})$.
  Here, the desired output state $\ket{0}_{q_1}\bra{0}$ is in superposition with error terms $H\Phi(\ket{0}_{q_1}\bra{0})H$ and $\Phi(\rho_1)$.
  This means that not all of the photons will necessarily remain on the `0' path.
  Some may end up on the `1' path.

  As an example, consider the case where the error probability $p$ is $0.1$ and the error model is defined by $\Phi(\rho) = X\rho X$.
  This means that with probability $0.1$, an $X$ gate is applied instead of an $H$ gate.
  With this model, the final state of the system will be $0.91\ket{0}_{q_1}\bra{0} + 0.09\ket{1}_{q_1}\bra{1}$.
  So there is a 9\% chance that a photon will end up on the `1' path.
\end{example}

\section{Quantum Robustness}
\label{sec:robustness}

This section defines a notion of \emph{quantum robustness}, which
bounds the distance between the output of a noisy execution
of a program and the ideal (noise-free) execution of
the same program. We first introduce distance measures in quantum information, and then present the semantic definition of
robustness and define a logic for reasoning about it, which we
prove sound. 

\subsection{Distance Measures of Quantum States and Superoperators}
In the context of computation with noise, we wish to measure the distance between the ideal state and the state influenced by noise.
In classical probabilistic computation, the output can be described as a probability distribution over all possible outputs.
A common measure is the \emph{total variation} distance of two distributions $p,q$, defined by $|p-q|=1/2 \sum_x|p(x)-q(x)|$.

In quantum computation, we define the \emph{trace distance} as the quantum generalization of the total variation distance.
For Hermitian operator $A$, let the trace norm $\nm{A}_1$ be the summation of the absolute value of all its eigenvalues. 
When $A$ is positive semidefinite, one has $\nm{A}_1=\tr(A)$.
It is worth noting that for any operators $A$ and $B$ the following triangle inequality holds:
\begin{equation}
   \nm{A + B}_1 \leq \nm{A}_1 + \nm{B}_1.
\end{equation}
For two states $\rho_1, \rho_2 \in\D(\H)$, the \emph{trace distance} between $\rho_1$ and $\rho_2$, denoted $\tdist{\rho_1}{\rho_2}$, is defined to be $\frac{1}{2}\nm{\rho_1-\rho_2}_1$.
It can be shown that the trace distance satisfies
\begin{align}\label{eq:tracedist}
\tdist{\rho_1}{\rho_2} \equiv \frac{1}{2} \nm{\rho_1-\rho_2}_1=\max_{0\sqsubseteq P\sqsubseteq I}\tr(P(\rho_1-\rho_2)),
\end{align}
where the maximization is over all possible measurements $P$ (i.e., $0\sqsubseteq P\sqsubseteq I$). By definition, $0\leq \tdist{\rho}{\sigma} \leq 1$ for any pair of states $\rho, \sigma$.
One can also interpret the trace distance as the advantage with which one can distinguish two states $\rho$ and $\sigma$.
For example, the states $\ket{0}\bra{0}$ and $\ket{1}\bra{1}$ are perfectly distinguishable by measurement $\{\ket{0}\bra{0}, \ket{1}\bra{1}\}$, so $\tdist{\ket{0}\bra{0}}{\ket{1}\bra{1}}=1$.

One can further define the distance between two different superoperators. 
For any two superoperators $\E$ and $\E'$ over $\H$, an intuitive way to define their distance is to find the input state to both superoperators that maximizes the trace distance of their output states.
However, it turns out that this definition alone is insufficient to capture the distance between superoperators because of a unique quantum feature called \emph{entanglement}.
The distinguishability between $\E$ and $\E'$ can be significantly enlarged
\footnote{For example, consider a 2-dimensional Hilbert space $\H=\mathbb{C}^2$ and two superoperators $\E,\E'$ over $\H$ with $\E(\rho)=\frac{1}{3}\tr(\rho)I_\H+\frac{1}{3}\rho^T$ and $\E'(\rho)=\tr(\rho)I_\H-\rho^T$, where $\rho^T$ is the transpose of $\rho$.
Without introducing an auxiliary space, direct calculation gives $\tdist{\E(\rho)}{\E'(\rho)}=\frac{1}{3}\nm{I-2\rho}_1\leq\frac{2}{3}$ for any single qubit state $\rho$.
However, if we use an auxiliary qubit and apply $\E$ to the maximally entangled state $\ket{\phi^+}=\frac{1}{\sqrt{2}}(\ket{00}+\ket{11})$, we have $\tdist{\E(\ket{\phi^+}\bra{\phi^+})}{\E'(\ket{\phi^+}\bra{\phi^+})}=1$.}
 when one introduces an auxiliary Hilbert space $\mathcal{A}$ and tries to distinguish $\E\otimes I_{\A}$ and $\E'\otimes I_{\A}$ with some \emph{entangled} input state over $\H \otimes \A$~\citep{diamond-norm}.

To account for this, we define the \emph{diamond} norm between two superoperators $\E,\E'$ as
\begin{align}\label{eq:diamond}
  \nm{\E-\E'}_\diamond \equiv \max_{\rho\in\D(\H\otimes\mathcal{A})~:~\tr(\rho) = 1} \tdist{\E\otimes I_{\mathcal{A}}(\rho)}{\E'\otimes I_{\mathcal{A}}(\rho)}
\end{align}
for any auxiliary space $\mathcal{A}$. Without loss of generality, one can assume $\A$ is a copy of $\H$, i.e., $\A=\H$~\citep{Wat18}.
Given the representations of $\E,\E'$, one can efficiently calculate the diamond norm $\nm{\E-\E'}_\diamond$ by a semidefinite program (SDP)~\citep{W09}.

Imagine superoperators that represent different components of a (noisy or ideal) quantum program, which may act on different parts of the quantum system. 
The diamond norm will allow us to address potential entanglement between different parts of the state and ensure that we can compose the distances computed from different components of the program.

\subsection{Definition of Quantum Robustness}\label{sec:judgment}

To capture how noise (error) impacts the execution of quantum program
$\widetilde{P}$, we want to compare $\sem{\widetilde{P}}$  and
$\sem{P}$, which are superoperators representing the execution of
quantum program $P$ with and without noise respectively. A natural
candidate is to use the aforementioned diamond norm to measure the
distance between $\sem{\widetilde{P}}$  and $\sem{P}$.
To account for prior knowledge of the input state, we extend the definition of the diamond norm to consider only input states that satisfy predicate $Q$ to degree at least $\lambda$. More explicitly, we have

\begin{definition}[$(Q,\lambda)$-diamond norm] \label{def:q-diamond}
Given superoperators $\E$, $\E'$, quantum predicate $Q$ over $\H$, and $0\leq \lambda \leq 1$, the $(Q, \lambda)$-diamond norm between $\E$ and $\E'$, denoted $\nm{\E-\E'}_{Q,\lambda}$,  is defined by
\begin{align}\label{eq:diamond-q}
  \nm{\E-\E'}_{Q,\lambda} \equiv \max_{\rho\in\D(\H\otimes\mathcal{A})~:~\tr(\rho) = 1,~\tr(Q\rho)\geq\lambda} \tdist{\E\otimes I_{\mathcal{A}}(\rho)}{\E'\otimes I_{\mathcal{A}}(\rho)},
\end{align}
where $\A$ is any auxiliary space. 
\end{definition}
We remark that $\A$ can be assumed to be $\H$
without loss of generality due to a similar reason for the original diamond norm (see, for example,~\citet[Chap 3]{Wat18}). 

We argue that $(Q,\lambda)$-diamond norm is a seminorm in \app{Qlambda}. 
Intuitively, this is conceivable since we only restrict the input state from all density operators to a convex subset satisfying $\tr(Q\rho)\geq \lambda$.  Note that, by definition, $\nm{\cdot}_\diamond \equiv \nm{\cdot}_{I, \lambda}$ for any $0\leq \lambda \leq 1$. 

Let $\E$ and $\E'$ be superoperators over $\H$. By extending~\citet{W09}, we show that $\nm{\E-\E'}_{Q,\lambda}$ can be efficiently computed by the following semidefinite program (SDP):
\begin{align}
 \max &  \quad \tr(J(\Phi) W) \\
 \mathrm{s.t.} & \quad  W \leq I_\H \otimes \rho,~\tr(Q\rho) \geq \lambda, \\
    & \quad  \rho \in \D(\H),~W \text{ is a positive semidefinite operator over } \H \otimes \H,
\end{align} 
where $\Phi=\E-\E'$ and $J(\Phi)$ is the \emph{Choi-Jamiolkowski} representation of $\Phi$. 
Note that the above SDP is identical to the SDP used in \citet[Section 4]{W09} to compute $\nm{\E-\E'}_\diamond$, except that we have added the additional constraint $\tr(Q\rho) \geq \lambda$
to capture the requirement on input states. The correctness of the above SDP then basically follows from the analysis of \citet{W09} and the definition of $(Q, \lambda)$-diamond norm. 

% \begin{remark}
The standard diamond norm and $(Q,\lambda)$-diamond norm can be significantly different for the same pair of superoperators $\E, \E'$. For example, consider $\E=H\circ H$ and $\E'=HZ\circ ZH$. We can show 
\footnote{For the normal diamond norm, consider the input state to be $\rho=\ket{+}\bra{+}\otimes\sigma$ for some ancilla state $\sigma$. It is easy to see that $\E(\rho)=\ket{0}$ and $\E'(\rho)=\ket{1}$, which are perfectly distinguishable. For the $(\ket{0}\bra{0}, \frac{3}{4})$-diamond norm, without loss of generality, consider any input state $\ket{\psi}=\cos\theta\ket{0}\ket{\psi_0}+\sin\theta\ket{1}\ket{\psi_1}$ where $\theta\in[0,\frac{\pi}{2}]$.
Requiring $\ket{\psi}$ to satisfy $\ket{0}\bra{0}$ to degree at least $\frac{3}{4}$, we have $\cos^2\theta\geq \frac{3}{4}$ and therefore $\theta\in[0,\frac{\pi}{6}]$.
Simple calculation gives
$H\ket{\psi}=\cos\theta\ket{+}\ket{\psi_0}+\sin\theta\ket{-}\ket{\psi_1}$
and $HZ\ket{\psi}=\cos\theta\ket{+}\ket{\psi_0}-\sin\theta\ket{-}\ket{\psi_1}$.
The projector that maximally distinguishes these states is $\ket{0}\bra{0}\otimes I$, so we have
\begin{align*}
  \nm{\E-\E'}_{\ket{0}\bra{0},3/4}
  =\frac{1}{2} \left ((\cos\theta+\sin\theta)^2 -
  (\cos\theta-\sin\theta)^2 \right)
  =\sin 2\theta\leq \frac{\sqrt{3}}{2},
\end{align*}
the equality of which holds when $\theta=\pi/6$.
}
that $\nm{\E-\E'}_\diamond=1$, whereas $\nm{\E-\E'}_{\ket{0}\bra{0},\frac{3}{4}}=\sqrt{3}/2$. 
Thus, the $(Q,\lambda)$-diamond norm can help us leverage prior knowledge about input states to obtain more accurate bounds.
% \end{remark}

Using the $(Q, \lambda)$-diamond norm, we define a notion of quantum
robustness as follows.

\begin{definition}[Quantum Robustness] \label{def:r_judgment}
The noisy program $\widetilde{P}$ (over $\H$, having ideal program 
  $P=\ideal(\widetilde{P})$) is \emph{$\epsilon$-robust under
    $(Q,\lambda)$} if and only if 
\begin{equation} 
  \label{eq:qrobust}
  \nm{\sem{\widetilde{P}}- \sem{P}}_{Q, \lambda} \leq \epsilon. 
\end{equation}
Here, $Q$ is a quantum predicate over $\H$ and $0 \leq \lambda, \epsilon \leq 1$.  

By the definition of the $(Q, \lambda)$-diamond norm and
the trace distance, \eq{qrobust} can be equivalently stated as the
following. 
\begin{align}
  \label{eq:analysis-judgment}
  \forall\rho \in \D(\H \otimes \H),~\tr(Q\rho)\geq\lambda\tr(\rho)\Rightarrow \frac{1}{2}\nm{\sem{\widetilde{P}}\otimes I_{\H} (\rho)-\sem{P}\otimes I_{\H} (\rho)}_1\leq \epsilon\tr(\rho).
\end{align}
\end{definition}

Since $\epsilon$ measures the distance between $\sem{\widetilde{P}}$ and
$\sem{P}$, the \emph{smaller} $\epsilon$ is, the \emph{closer} the noisy program $\widetilde{P}$
is to the ideal program $P$. 
One could think of $\epsilon$ as measuring both the probability that
noise can happen and intensity of that noise. 
When the noise is strong, a noisy program $\widetilde{P}$ being $\epsilon$-robust implies that the probability of noise is at most $\epsilon$. 
When the noise is weak, it could occur with greater probability, but
its effect will be much smaller. 

Intuitively, the use of precondition $Q$ can help us obtain more accurate bounds. For example, one can use $Q$ to characterize prior information about the input state due to the nature of underlying physical systems. 
Even without any prior knowledge about the input state, preconditions can still be leveraged for different branches of the program in case statements and loops. 

\subsection{Logic for Quantum Robustness}\label{sec:noisy-rules}

A program's robustness can be proved by working out the
(denotational) semantics of programs $P$ and $\widetilde{P}$ and
applying Definition~\ref{def:r_judgment} directly. However, this computation
may be difficult. As an alternative, we present a logic for proving judgments of the
form $\rjudgment{Q}{\lambda}{\widetilde{P}}{\epsilon}$, meaning that
program $\widetilde{P}$ is $\epsilon$-robust under
$(Q,\lambda)$. In \sec{soundness} we prove the logic is sound. In
\sec{case-studies} we demonstrate the use of the logic, alongside
direct proofs of robustness for some cases (e.g., error correction).

\begin{figure}
  \begin{gather*}
    \infer[\text{(Skip)}]{\rjudgment{Q}{\lambda}{\cskip}{0}}{}
    \qquad
    \infer[\text{(Init)}]{\rjudgment{Q}{\lambda}{(q:=\ket{0})}{0}}{}\\
    \infer[\text{(Unitary)}]
      {\rjudgment{Q}{\lambda}{(\overline{q}:\cong_{p,\Phi}U[\overline{q}])}{p\epsilon}}
      {\nm{U\circ U^\dag-\Phi}_{Q,\lambda}\leq\epsilon}\\
    \infer[\text{(Weaken)}]
      {\rjudgment{Q}{\lambda}{\widetilde{P}}{\epsilon}}
      {\rjudgment{Q'}{\lambda'}{\widetilde{P}}{\epsilon'} 
       \qquad \epsilon'\leq\epsilon 
       \qquad Q\sqsubseteq Q'
       \qquad \lambda'\leq\lambda}\\
    \infer[\text{(Rescale)}]
      {\rjudgment{Q}{\lambda}{\widetilde{P}}{\epsilon}}
      {\rjudgment{Q/\delta}{\lambda/\delta}{\widetilde{P}}{\epsilon} 
       \qquad {0\sqsubseteq Q, Q/\delta\sqsubseteq I 
       \qquad 0 \leq \lambda, \lambda/\delta \leq 1}} \\
    \infer[\text{(Sequence)}]
      {\rjudgment{Q_1}{\lambda}{(\widetilde{P_1};\widetilde{P_2})}{\epsilon_1+\epsilon_2}}
      {\rjudgment{Q_1}{\lambda}{\widetilde{P_1}}{\epsilon_1} 
       \qquad \rjudgment{Q_2}{\lambda}{\widetilde{P_2}}{\epsilon_2} 
       \qquad \{Q_1\}P_1\{Q_2\}}\\
    \infer[\text{(Case)}]
      {\rjudgment{\sum_m M_m^\dag Q_m M_m}
                  {1-t\delta}
                  {(\qif{M[\overline{q}]=\overline{m\to \widetilde{P_m}}} )}
                  {(1-t)\epsilon+t}}
      {\forall m, \rjudgment{Q_m}{1-\delta}{\widetilde{P_m}}{\epsilon}
       \qquad t,\delta\in[0,1]}\\
    \infer[\text{(While-Bounded)}]
      {\rjudgment{\lambda M_0^\dag M_0+M_1^\dag QM_1}
                  {\lambda}
                  {\widetilde{P}}
                  {n\epsilon/(1-a)}}
      {\begin{array}{c}
         \rjudgment{Q}{\lambda}{\widetilde{P_1}}{\epsilon}
         \qquad \{Q\}P_1\{\lambda M_0^\dag M_0+ M_1^\dag Q M_1\} \\
         \widetilde{P}\equiv\qwhile{M[\bar{q}]=1}{\widetilde{P_1}}
         \qquad P\text{ is } (a,n)\text{-bounded}
       \end{array}}\\
    \infer[\text{(While-Unbounded)}]
      {\rjudgment{Q}{\lambda}{(\qwhile{M[\bar{q}]=1}{\widetilde{P_1}})}{1}}
      {}
  \end{gather*}
  \caption{Rules for logic of quantum robustness.}
  \label{fig:reliability-sem}
\end{figure}

The rules for our logic are given in \fig{reliability-sem}.
\subsubsection{Simple Rules} 
The \emph{Skip} and \emph{Init} rules say that the skip and initialization operations are always error-free.
These operations will not increase the distance between $\sem{P}$ and $\sem{\widetilde{P}}$.
The \emph{Unitary} rule says that if we can bound the $(Q, \lambda)$-diamond norm between the intended operation $U$ and the noise operation $\Phi$ by $\epsilon$, then we can bound the total distance by $p\epsilon$.
The \emph{Weaken} rule says that we can always safely make the precondition more restrictive, increase the degree to which an input state must satisfy the predicate, or increase the upper bound on the distance between the noisy and ideal programs.
The \emph{Rescale} rule says that equivalent forms of our judgment can be obtained by rescaling $Q$ and $\lambda$. Note that the \emph{Rescale} rule does not weaken the judgment, but rather provides some flexibility in choosing $Q,\lambda$ compatible with other rules; one can scale by $\delta$ so long as $Q/\delta$ and $\lambda/\delta$ are still well defined.  
The \emph{Sequence} rule allows us to compose two judgments by summing their computed upper bounds.
Note that in the \emph{Sequence} and \emph{While-Bounded} rules we define Hoare triples as in \sec{qpredicate}.

\subsubsection{The Case Rule}
The \emph{Case} rule says that, given appropriate bounds for every branch of a case statement, we can bound the error of the entire case statement.
%%% XW: this is a technical point that cannot be easily spoken out in the introduction. 
Note that $\sum_m M_m^ \dag Q_m M_m$ is the weakest precondition of the case construct in quantum Hoare logic~\citep{Yin16}. %, a favorable property for the rule. 
%% the following is also hard to speak in the intro.... too technical; 
In a logic for classical programs, one might expect each branch of a case statement to satisfy the precondition perfectly.
However, in a quantum logic, as we will discuss in the soundess proof (see \sec{soundness}), this is not necessarily true.
To see this, note that in our rule we start with the precondition $\sum_m M_m^ \dag Q_m M_m$ and $\lambda=1- t \delta$ on the input state to the case statement, but we can only guarantee that a weighted fraction of $1-t$ of the branches satisfy a weaker precondition $Q_m$ and $\lambda'=1-\delta$ for some choice of $t \in [0,1]$. 
(Note that $1-\delta \leq 1-t\delta$ for $t, \delta \in [0,1]$.) 

When applying this rule, one can make $1-\delta$ and $\epsilon$ the same for every $\widetilde{P_m}$ by applying the \emph{Weaken} and/or \emph{Rescale} rules. 
The choice of $t$ will represent a tradeoff between a lower error bound and a more restrictive requirement on the satisfaction of the predicate (see \ex{t}).

\begin{example}[Simple case statement]
  Consider the following program.
  \begin{alignat*}{2}
    \widetilde{P} \equiv~ & \qif{M[\overline{q}] =~& &0 \to \overline{q}:\cong_{p,\Phi} H[\overline{q}]\\
                          &                        & &1 \to \cskip\\
                          & }                      & &
  \end{alignat*}
  This program performs measurement in the $\{ \ket{0}, \ket{1} \}$ basis and either applies a noisy Hadamard $H$ gate or does nothing depending on the measurement outcome.
  In order to apply the \emph{Case} rule, we must bound the error of both branches.
  By the \emph{Skip} rule, the second branch will have zero error, i.e., $\rjudgment{Q}{\lambda}{\cskip}{0}$ for any $Q$ and $\lambda$.
  We will choose $Q = I$ and $\lambda = 1$.
  By the \emph{Unitary} rule, the error of the first branch will depend on $\Phi$.

  Consider an error model given by $\Phi(\rho) = HZ\rho ZH$.
  This means that with probability $1-p$, $\overline{q}:\cong_{p,\Phi} H[\overline{q}]$ applies an $H$ gate, and with probability $p$ it applies a $Z$ gate followed by an $H$ gate. %, where $Z$ is the Pauli $Z$ matrix.
  Recall that $Z\ket{0}\bra{0}Z = \ket{0}\bra{0}$.
  This means that a $Z$ error will not affect the state in the first branch (because the first branch corresponds to having measured a 0).
  So we have that $\rjudgment{\ket{0}\bra{0}}{1}{(\overline{q}:\cong_{p,\Phi} H[\overline{q}])}{0}$, which says that if the input state is the $\ket{0}$ state, then with the error model defined, there will be no error during an application of $H$ (i.e. $\epsilon = 0$).
  Note that without the precondition $\ket{0}\bra{0}$, we would have $\epsilon > 0$.

  Given $\rjudgment{\ket{0}\bra{0}}{1}{(\overline{q}:\cong_{p,\Phi} H[\overline{q}])}{0}$ and $\rjudgment{I}{1}{\cskip}{0}$, we can conclude that
  $\rjudgment{I}{1}{\widetilde{P}}{t}$ by the \emph{Case} rule with the following simplification, 
\[
    M_0^\dag Q_0 M_0 + M_1^\dag Q_1 M_1 = \ket{0}\bra{0}\ket{0}\bra{0}\ket{0}\bra{0} + \ket{1}\bra{1}I\ket{1}\bra{1}= \ket{0}\bra{0} + \ket{1}\bra{1} = I.
\]
  Because $\delta  = 0$, we can choose $t = 0$ to get an overall error bound of zero.
  However, in general, the choice of $t$ will represent a tradeoff between a lower error bound and a more restrictive requirement on the satisfaction of the predicate (i.e. a larger $\lambda$ value).
\end{example}

%\begin{remark}\normalfont
  Note that the precondition for each branch does not need to directly relate to the measurement basis.
  For example, consider instead the following program.
  \begin{alignat*}{2}
    & \qif{M[\overline{q}] =~& &+ \to \overline{q}:\cong_{p,\Phi} H[\overline{q}]\\
    &                        & &- \to \cskip\\
    & }                      & &
  \end{alignat*}
  This program performs measurement in the $\{ \ket{+}, \ket{-}\}$ basis.
  For this program, we can still use the judgments $\rjudgment{\ket{0}\bra{0}}{1}{(\overline{q}:\cong_{p,\Phi} H[\overline{q}])}{0}$ and $\rjudgment{I}{1}{\cskip}{0}$ described in the previous example,
  but now our conclusion will be $\rjudgment{\frac{1}{2}\ket{+}\bra{+} + \ket{-}\bra{-}}{1}{\widetilde{P}}{t}$, by the following simplification, 
 \[
  M_0^\dag Q_0 M_0 + M_1^\dag Q_1 M_1 = \ket{+}\bra{+}\ket{0}\bra{0}\ket{+}\bra{+} + \ket{-}\bra{-}I\ket{-}\bra{-}= \frac{1}{2}\ket{+}\bra{+} + \ket{-}\bra{-}.
  \]
  Note that $\frac{1}{2}\ket{+}\bra{+} + \ket{-}\bra{-} \sqsubseteq I$, so we have restricted the set of states for which this error bound will hold.
  This may be useful in situations where it is difficult to compute an error bound given certain preconditions.
  For example, it may be difficult to show that $\rjudgment{\ket{+}\bra{+}}{1}{(\overline{q}:\cong_{p,\Phi} H[\overline{q}])}{\epsilon}$ for a sufficiently low $\epsilon$, but it is easy to show that $\rjudgment{\ket{0}\bra{0}}{1}{(\overline{q}:\cong_{p,\Phi} H[\overline{q}])}{0}$.
%\end{remark}

\begin{example}[t-value tradeoffs] \label{ex:t}
  Consider the following program.
  \begin{alignat*}{2}
    \widetilde{P} \equiv~ & \qif{M[\overline{q}] =~& &0 \to \widetilde{P_0}\\
                          &                        & &1 \to \widetilde{P_1}\\
                          & }                      & &
  \end{alignat*}
  Say that we have that $\rjudgment{Q_0}{1 - 0.05}{\widetilde{P_0}}{0.01}$ and $\rjudgment{Q_1}{1 - 0.25}{\widetilde{P_1}}{0.1}$ for programs $\widetilde{P_0}$ and $\widetilde{P_1}$.
  Now we can use the \emph{Case} rule to conclude that 
  $\rjudgment{M_0^\dag Q_0 M_0 + M_1^\dag Q_1 M_1}{1 - t\delta}{\widetilde{P}}{(1-t)\epsilon+t}$
  for $\delta = \min(0.05, 0.25) = 0.05$, $\epsilon = \max(0.01, 0.1) = 0.1$, and $t \in [0,1]$.

  If we choose $t=0$ then we have the lowest possible error bound, but $1 - t\delta = 1$, so the produced error bound will only apply to states that completely satisfy the predicate $M_0^\dag Q_0 M_0 + M_1^\dag Q_1 M_1$.\footnote{This might even be impossible when $M_0^\dag Q_0 M_0 + M_1^\dag Q_1 M_1\sqsubset I$. In that case, the choice $t=0$ becomes useless.}
  If we choose $t=1$ then we have the least restrictive condition on input states, but $(1-t)\epsilon + t = 1$, which is the trivial error bound.
  By choosing $t$ to be between 0 and 1 we can trade between a low error bound and a less restrictive constraint on the input state.
  For example, for $t=0.25$ we have that $1 - t\delta  = 0.9875$ and $(1-t)\epsilon + t = 0.325$.
  For $t=0.5$ we have that $1 - t\delta = 0.975$ and $(1-t)\epsilon + t = 0.55$.
\end{example}

\subsubsection{The Loop Rule}
Finally, the \emph{While-Bounded} and \emph{While-Unbounded} rules allow us to compute an upper bound for the distance between the noisy and ideal versions of a while loop.
The \emph{While-Unbounded} rule is a trivial bound on the distance. The \emph{While-Bounded} rule, however, demonstrates a non-trivial upper bound with an assumption called $(a,n)$-boundedness. 
Such an assumption is necessary for us to get around the potential issue of termination and to be able to reason about interesting programs in our case study. 

Intuitively, $(a,n)$-boundedness is a condition on how fast the \emph{ideal} loop will converge, which is inherent to the control flow of the program and does not depend on any specific error model. We view this as an advantage as we do not need a new analysis for every possible noise model. 
%\xw{an advantage; don't need to redo the analysis}
\begin{definition}[$(a,n)$-boundedness] \label{def:an}
A while loop $P\equiv\qwhile{M[q]=1}{P_1}$ is said to be $(a,n)$-bounded for $0\leq a<1$ and integer $n\geq 1$ if
\begin{align}\label{eq:anbounded}
  (\E^*)^n(M_1^\dag M_1)\sqsubseteq aM_1^\dag M_1
\end{align}
where the linear map $\E(\rho)$ is defined as $\sem{P_1}(M_1\rho M_1^\dag)$ and $\E^*$ is the dual map of $\E$.\footnote{If $\sem{P_1}$ can be written as $\sum_k F_k\circ F_k^\dag$ for some set of Kraus operators $\{F_k\}_k$,
the Kraus form of $\E^*$ is $\sum_k M_1^\dag F_k^\dag\circ F_k M_1$.}
\end{definition}

Intuitively speaking, a loop is $(a,n)$-bounded if, for every state $\rho$, after $n$ iterations it is guaranteed that at least a $(1-a)$-fraction of the state has exited the loop.
A while loop with this nice property is guaranteed to terminate with probability $1$ on all input states, which helps avoid the termination issue. As we will show in the examples that follow and in \sec{case-studies}, specific $(a,n)$ can be derived analytically or numerically for concrete programs.\footnote{The rough idea is to guess (or enumerate) $n$ and prove $a$ either analytically or numerically (with a simple SDP).} 
We also remark that by assuming $(a,n)$-boundedness, we avoid weakening $\lambda$ like in the \emph{Case} rule. 

In~\citet{CMR2013}, loops are assumed to have a bounded number of iterations or a trivial upper bound will be used (i.e, our rule \emph{While-Unbounded}). Because of the use of $(a,n)$-boundedness, we can handle more complicated loops. One also has the freedom to choose appropriate values of $Q$ for different purposes. A simple choice is $Q=\lambda I$. A less trivial choice of $Q$ is shown in the following example. 

\begin{example}[Slow state preparation]\label{ex:ssp}
  In this example, we consider the following program which prepares the standard basis state $\ket{1}$:
  \begin{align}
    SSP\equiv q:=\ket{0};\qwhile{M[q]=0}{q:=H[q]; q:=I[q]},
  \end{align}
  where $M$ is the standard basis measurement $\{ \ket{0}\bra{0}, \ket{1}\bra{1}\}$.  
  We consider the case where there is a bit flip error with probability 0.01 when applying the $I$ gate, i.e.,
  the ideal and the noisy loop bodies are $P_1\equiv q:=H[q]; q:=I[q];$ and $\widetilde{P_1}\equiv q:=H[q]; q:\cong_{0.01,X}I[q]$.
  Then $\sem{P_1}=H\circ H$ and $\sem{\widetilde{P_1}}=0.99H\circ H+0.01 XH\circ HX$.
  Consider $Q = \ket{0}\bra{0}$ and $\lambda = 1$.
  Since $\sem{P_1}(\ket{0}\bra{0})=\sem{\widetilde{P_1}}(\ket{0}\bra{0})=\ket{+}\bra{+}$, we have that $\nm{\sem{P_1}-\sem{\widetilde{P_1}}}_{\ket{0}\bra{0},1}=0$, and by the \emph{Unitary} rule, $\rjudgment{\ket{0}\bra{0}}{1}{\widetilde{P_1}}{0}$.
  
  We can use this choice of $Q$ and $\lambda$ when applying the \emph{While-Bounded} rule.
  First, note that the statement $\{Q\}P_1\{\lambda M_0^\dag M_0 + M_1QM_1^\dag\}$ holds because $\lambda M_0^\dag M_0 + M_1QM_1^\dag = I$ for our choice of $M_0, M_1, Q, \lambda$. 
  Next, we need to show $(a,n)$-boundedness of the loop. 
  This requires us to consider the behavior of $\E^*$ where $\E^*$ is the dual of $\E=HM_1\circ M_1H$.   
  We claim that the while loop is $(1/2,1)$-bounded because 
  \[
   \E^*(M_1^\dag M_1)=\ket{0}\bra{0}H\ket{0}\bra{0}H\ket{0}\bra{0}=\frac{1}{2}\ket{0}\bra{0}=\frac{1}{2} M_1^\dag M_1. 
   \]
  Now by the \emph{While-Bounded} rule, we have that $\rjudgment{I}{1}{\widetilde{SSP}}{0}$, i.e.,
  the program is perfectly robust. % in the presence of noise.
\end{example}
%\begin{remark}
  In \ex{ssp}, if we use the precondition $I$ in our judgment of the robustness of the loop body,
  the best upper bound we can argue is $\epsilon=0.01$, i.e., $\rjudgment{I}{1}{\widetilde{P_1}}{0.01}$. 
  Then, applying the \emph{While-Bounded} rule yields $\rjudgment{I}{1}{\widetilde{SSP}}{0.02}$ since $\frac{n\epsilon}{1-a}=2\epsilon=0.02$.
  Therefore, restricting the state space with the predicate $Q=\ket{0}\bra{0}$, as we did in \ex{ssp}, was a better choice, as it yielded a better bound.
  Moreover, this choice of $Q$ is natural 
  since the post-measurement state entering the loop satisfies $Q$ perfectly.
%\end{remark}
%\begin{remark}
  We note that the program $SSP$ might look contrived for preparing $\ket{1}$ when compared to the more straightforward program $SP\equiv q:=\ket{0}; q:=X[q]$. 
  We argue that $SSP$ might be preferred over $SP$ in the presence of noise. Consider the same noise model as above. Namely, let $\widetilde{SP}\equiv q:=\ket{0}; q:=X[q]; q:\cong_{0.01, X}I[q].$
  We have shown that $\rjudgment{I}{1}{\widetilde{SSP}}{0}$ given this noise model, which says that $\widetilde{SSP}$ is perfectly robust.
  By directly applying Definition~\ref{def:r_judgment}, we can show that $\widetilde{SP}$ is $0.01$-robust given the same noise model,\footnote{
  Note that $\sem{\widetilde{SP}}=0.99 X\circ X + 0.01 I\circ I$, and hence we have $\nm{\sem{SP}-\sem{\widetilde{SP}}}_{I,1} = 0.01\nm{X\circ X-I\circ I}_{I,1}=0.01$.} which suggests that $\widetilde{SP}$ is less desirable in this situation. 
%\end{remark}

\subsection{Soundness} \label{sec:soundness} In this section, we show that logic given in \fig{reliability-sem} is sound.

\begin{theorem}[Soundness]
If $\rjudgment{Q}{\lambda}{\widetilde{P}}{\epsilon}$ then  $\widetilde{P}$ is $\epsilon$-robust under $(Q,\lambda)$.
\end{theorem}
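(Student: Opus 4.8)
The plan is to induct on the derivation of $\rjudgment{Q}{\lambda}{\widetilde{P}}{\epsilon}$, showing for each rule that its conclusion satisfies the semantic condition \eq{analysis-judgment}: namely $\tfrac{1}{2}\nm{\sem{\widetilde{P}}\otimes I_\H(\rho)-\sem{P}\otimes I_\H(\rho)}_1\leq\epsilon\tr(\rho)$ for every $\rho$ with $\tr(Q\rho)\geq\lambda\tr(\rho)$. Two facts would be isolated up front and reused everywhere: (i) every admissible superoperator is a contraction in the trace norm, so $\nm{\F(A)}_1\leq\nm{A}_1$ for Hermitian $A$; and (ii) the triangle inequality for $\nm{\cdot}_1$. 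The easy cases then fall out quickly. \emph{Skip} and \emph{Init} have $\sem{\widetilde{P}}=\sem{P}$; \emph{While-Unbounded} uses that both outputs are subnormalized, so their trace distance is at most $\tr(\rho)$; \emph{Unitary} follows from homogeneity of the $(Q,\lambda)$-diamond seminorm, since $\sem{\widetilde{P}}-\sem{P}=p(\Phi-U\circ U^\dag)$; and \emph{Weaken} and \emph{Rescale} hold because their hypotheses only shrink (resp. leave unchanged) the feasible set $\{\rho:\tr(Q\rho)\geq\lambda\tr(\rho)\}$ over which the seminorm maximizes.

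For \emph{Sequence} I would split $\sem{\widetilde{P_2}}\sem{\widetilde{P_1}}-\sem{P_2}\sem{P_1}=\sem{\widetilde{P_2}}(\sem{\widetilde{P_1}}-\sem{P_1})+(\sem{\widetilde{P_2}}-\sem{P_2})\sem{P_1}$ and bound the two pieces separately. Contractivity of $\sem{\widetilde{P_2}}$ together with the inductive bound on $\widetilde{P_1}$ handles the first piece ($\leq\epsilon_1\tr(\rho)$). For the second, the Hoare triple $\{Q_1\}P_1\{Q_2\}$ is exactly what is needed: it gives $\tr(Q_2\sem{P_1}\rho)\geq\tr(Q_1\rho)\geq\lambda\tr(\rho)\geq\lambda\tr(\sem{P_1}\rho)$, so the intermediate \emph{ideal} state satisfies $(Q_2,\lambda)$ and the inductive bound $\epsilon_2$ applies.

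For \emph{Case}, expanding $\sem{\widetilde{P}}-\sem{P}=\sum_m(\sem{\widetilde{P_m}}-\sem{P_m})(M_m\cdot M_m^\dag)$ and using the triangle inequality reduces everything to a counting argument. Writing $\rho_m=M_m\rho M_m^\dag$ and $p_m=\tr(\rho_m)$, I would call branch $m$ \emph{good} if $\rho_m$ satisfies $(Q_m,1-\delta)$ and bound its contribution by $\epsilon p_m$ (inductive hypothesis), and bound each remaining branch trivially by $p_m$. The crux is that the total weight $w$ of the bad branches is at most $t\tr(\rho)$: each bad branch has $p_m-\tr(Q_m\rho_m)>\delta p_m$, whereas $\sum_m(p_m-\tr(Q_m\rho_m))=\tr(\rho)-\tr((\sum_m M_m^\dag Q_m M_m)\rho)\leq t\delta\tr(\rho)$ by the precondition, so $\delta w\leq t\delta\tr(\rho)$. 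The total is then $\epsilon(\tr(\rho)-w)+w\leq((1-t)\epsilon+t)\tr(\rho)$, as claimed, which explains why only a weighted fraction of branches is required to satisfy the stronger predicate.

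The main obstacle I expect is \emph{While-Bounded}. Setting $\E(\sigma)=\sem{P_1}(M_1\sigma M_1^\dag)$ and $\widetilde{\E}(\sigma)=\sem{\widetilde{P_1}}(M_1\sigma M_1^\dag)$, so that $\sem{P}\rho=\sum_{k\geq0}M_0\E^k(\rho)M_0^\dag$ and likewise for $\sem{\widetilde{P}}$, I would telescope $\widetilde{\E}^k-\E^k=\sum_{j<k}\widetilde{\E}^{k-1-j}(\widetilde{\E}-\E)\E^j$ and reindex, collapsing the entire difference to $\sum_{j\geq0}\sem{\widetilde{P}}\big((\widetilde{\E}-\E)\E^j\rho\big)$. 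Contractivity of $\sem{\widetilde{P}}$ followed by the body bound $\epsilon$ bounds the $j$-th term by $\epsilon\,\tr(M_1\E^j\rho M_1^\dag)$, provided $M_1\E^j\rho M_1^\dag$ satisfies $(Q,\lambda)$. To get this I would prove a loop-invariant lemma: $\tr(Q'\rho)\geq\lambda\tr(\rho)$ with $Q'=\lambda M_0^\dag M_0+M_1^\dag Q M_1$ implies $\tr(QM_1\rho M_1^\dag)\geq\lambda\tr(M_1\rho M_1^\dag)$, and the Hoare triple $\{Q\}P_1\{Q'\}$ then propagates $(Q',\lambda)$ through $\E$, so $\E^j\rho$ satisfies $(Q',\lambda)$ for all $j$ by induction. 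Finally, with $f(j)=\tr(M_1^\dag M_1\E^j\rho)$, the $(a,n)$-boundedness condition $(\E^*)^n(M_1^\dag M_1)\sqsubseteq aM_1^\dag M_1$ yields $f(j+n)\leq a f(j)$, hence $\sum_{j\geq0}f(j)\leq n\tr(\rho)/(1-a)$ and a total error of $n\epsilon/(1-a)\cdot\tr(\rho)$. The delicate points to watch are justifying the interchange of the two infinite sums (via positivity and the geometric decay of $f$) and lifting both the Hoare triple and the $(a,n)$-bound to the ancilla-extended space $\H\otimes\H$, which is legitimate because tensoring with $I_\H$ preserves the L\"owner order and the relevant trace identities.
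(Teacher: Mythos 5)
Your proposal is correct and follows essentially the same route as the paper's proof: induction on the derivation, contractivity plus the triangle inequality and the Hoare triple for \emph{Sequence}, a weight-counting argument for \emph{Case} that is exactly the paper's Markov-inequality step unpacked, and the same per-iteration bound $\epsilon\,\tr(M_1^\dag M_1\E^j\rho)$ summed via the geometric decay from $(a,n)$-boundedness for \emph{While-Bounded}. The only cosmetic difference is that the paper handles the loop through finite approximants $\sem{S_k}$ and a limit $k\to\infty$ rather than your reindexed infinite double sum, which sidesteps the interchange-of-sums issue you flag.
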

\begin{proof}
The proof proceeds by induction on the derivation $\rjudgment{Q}{\lambda}{\widetilde{P}}{\epsilon}$.
We will work primarily from definition of robustness given in \eq{analysis-judgment}.
We note that superoperators $\sem{\widetilde{P}}$ and $\sem{P}$ apply on the same space. By the definition of the $(Q, \lambda)$-diamond norm, we need to consider 
$\sem{\widetilde{P}} \otimes I$ and $\sem{P}\otimes I$. However, to simplify the presentation, we will omit ``$\otimes I$'' in the following proof whenever there is no ambiguity. 

Now we consider each possible rule used in the final step of the derivation.
\begin{enumerate}
\item \emph{Skip:} This rule holds by observing $\sem{\widetilde{\cskip}}=\sem{\cskip}$, i.e., they refer to the same superoperator.  Thus, any $(Q, \lambda)$-diamond norm between them is 0. We choose $Q=I$ and $\lambda=0$. 
\item \emph{Init:} for the same reason as the proof for \emph{Skip}. 
\item \emph{Unitary:} For every state $\rho$ satisfying $\tr(Q\rho)\geq\lambda$,
  \begin{eqnarray*}
    \frac{1}{2}\nm{\sem{\bar{q}:\cong_{p,\Phi}U[\bar{q}]}\rho-\sem{\bar{q}:=U[\bar{q}]}\rho}_1
      &=& \frac{1}{2}\nm{\big((1-p)U\rho U^\dag + p\Phi(\rho)\big) - U\rho U^\dag}_1 \\
    &=& \frac{p}{2}\nm{U\rho U^\dag-\Phi(\rho)}_1 \leq p\nm{U\cdot U^\dag-\Phi}_{Q,\lambda} \leq p\epsilon.
  \end{eqnarray*}
The second from last inequality holds by the definition of the $(Q,\lambda)$-diamond norm and the last inequality follows from the premise. 
\item \emph{Weaken:} By induction, the premise $\rjudgment{Q'}{\lambda'}{\widetilde{P}}{\epsilon'}$ implies
  \[\forall \rho, \tr(Q'\rho)\geq\lambda'\tr(\rho)\Rightarrow \frac{1}{2}\nm{\sem{\widetilde{P}}\rho-\sem{P}\rho}_1\leq \epsilon'\tr(\rho).\]
  For any density matrix $\rho$, constants $0 \leq \lambda' \leq \lambda \leq 1$, and predicates $Q \sqsubseteq Q'$, $\tr(Q\rho)\geq\lambda\tr(\rho)$ implies that $\tr(Q'\rho)\geq\lambda'\tr(\rho)$.
  And for $0 \leq \epsilon' \leq \epsilon \leq 1$, $\frac{1}{2}\nm{\sem{\widetilde{P}}\rho-\sem{P}\rho}_1\leq \epsilon'\tr(\rho)$ implies that $\frac{1}{2}\nm{\sem{\widetilde{P}}\rho-\sem{P}\rho}_1\leq \epsilon\tr(\rho)$.
  Therefore, we have that \[\forall \rho, \tr(Q\rho)\geq\lambda\tr(\rho)\Rightarrow \frac{1}{2}\nm{\sem{\widetilde{P}}\rho-\sem{P}\rho}_1\leq \epsilon\tr(\rho).\]
  So $\widetilde{P}$ is $\epsilon$-robust under $(Q,\lambda)$. 
\item \emph{Rescale:} This rule follows by observing that the condition $\tr(\rho Q) \geq \lambda$ is equivalent to the condition $\tr(\rho Q/\delta) \geq \lambda/\delta$ for $\delta>0$. We only require that $Q, Q/\delta$ and $\lambda, \lambda/\delta$ are well defined, namely, $0\sqsubseteq Q, Q/\delta \sqsubseteq I$ and $0 \leq \lambda, \lambda/\delta \leq 1$.
\item \emph{Sequence:} For every state $\rho$,
  \begin{eqnarray*}
    \nm{\sem{\widetilde{P_1};\widetilde{P_2}}\rho - \sem{P_1;P_2}\rho}_1 &=& \nm{\sem{\widetilde{P_2}}\sem{\widetilde{P_1}}\rho-\sem{P_2}\sem{P_1}\rho}_1 \\
    &\leq& \nm{\sem{\widetilde{P_2}}\sem{\widetilde{P_1}}\rho-\sem{\widetilde{P_2}}\sem{P_1}\rho}_1 +
    \nm{\sem{\widetilde{P_2}}\sem{P_1}\rho-\sem{P_2}\sem{P_1}\rho}_1 \\
    &\leq& \nm{\sem{\widetilde{P_1}}\rho-\sem{P_1}\rho}_1
    +\nm{\sem{\widetilde{P_2}}\sem{P_1}\rho-\sem{P_2}\sem{P_1}\rho}_1.
  \end{eqnarray*}
  The inequality $\nm{\sem{\widetilde{P_2}}\sem{\widetilde{P_1}}\rho-\sem{\widetilde{P_2}}\sem{P_1}\rho}_1\leq \nm{\sem{\widetilde{P_1}}\rho-\sem{P_1}\rho}_1$ follows because quantum superoperators are contractive. 
  
  Now assume that $\tr(Q_1\rho) \geq \lambda \tr(\rho)$.
  By induction, the premise $\rjudgment{Q_1}{\lambda}{\widetilde{P_1}}{\epsilon_1}$ implies that $\frac{1}{2}\nm{\sem{\widetilde{P_1}}\rho-\sem{P_1}\rho}_1 \leq \epsilon_1\tr(\rho)$.
  Also, by the premise $\{Q_1\}P_1\{Q_2\}$, we have that $\tr(Q_2\sem{P_1}\rho) \geq \tr(Q_1\rho) \geq \lambda \tr(\rho) \geq \lambda \tr(\sem{P_1}\rho)$.
  Now we can use our induction hypothesis and the premise $\rjudgment{Q_2}{\lambda}{\widetilde{P_2}}{\epsilon_2}$ to conclude $\frac{1}{2}\nm{\sem{\widetilde{P_2}}\sem{P_1}\rho-\sem{P_2}\sem{P_1}\rho}_1 \leq \epsilon_2\tr(\sem{P_1}\rho)$.
  So, finally, we have that
  \[ \frac{1}{2}\nm{\sem{\widetilde{P_1};\widetilde{P_2}}\rho - \sem{P_1;P_2}\rho}_1 \leq \epsilon_1\tr(\rho)+\epsilon_2\tr(\sem{P_1}\rho) \leq (\epsilon_1+\epsilon_2)\tr(\rho). \]
  
\item \emph{Case:} Let $\widetilde{P} = \qif{M[\overline{q}]=\overline{m\to \widetilde{P_m}}}$. 
Assume the input state $\rho$ to the case statement satisfies $\sum_m M_m^\dag Q_m M_m$ to degree $\lambda'$, i.e., $\sum_m \tr(M_m^\dag Q_mM_m\rho)\geq \lambda' \tr(\rho)$.
To leverage the premise $\rjudgment{Q_m}{\lambda}{\widetilde{P_m}}{\epsilon}$ and the induction hypothesis to conclude that $\frac{1}{2}\nm{\sem{\widetilde{P_m}}\rho - \sem{P_m}\rho}_1 \leq \epsilon\tr(\rho)$, one must show the precondition $(Q_m , \lambda)$ holds for state $\rho$ entering branch $m$. 
%  Also let the upper bound of the error parameter for the conditional be $\epsilon'$ if the state satisfies $\sum_m M_m^\dag Q_m M_m$ of degree $\lambda'$.
A naive approach is to show that $\tr(M_m^\dag Q_mM_m\rho)\geq\lambda\tr(M_m^\dag M_m\rho)$ holds for \emph{every} branch $m$, which implies that
$\tr(Q_mM_m\rho M_m^\dag )\geq\lambda\tr( M_m\rho M_m^\dag)$ where $M_m\rho M_m^\dag$ is the (sub-normalized) post-measurement state entering branch $m$. 
We argue that this is in general impossible when $\lambda'=\lambda$. 
For instance, consider a collection of projective measurement operators $\{M_m\}_m$. If there exists a branch $i$ and a state $\rho$ supported on $M_i$ such that $\tr(M_i^\dag Q_i M_i\rho)\geq\lambda\tr(\rho)$ for some $\lambda>0$, obviously $\sum_{m}M_m^\dag Q_m M_m\rho\geq\lambda\tr(\rho)$, but none of the preconditions is satisfied except for the one for branch $i$ since $\tr(M_j^\dag Q_j M_j\rho)=0$ for each $j\neq i$.

Instead, we show that for a majority of the clauses $\tr(M_m^\dag Q_mM_m\rho)\geq\lambda\tr(M_m^\dag M_m\rho)$ holds for some $\lambda$ strictly less than $\lambda'$. 
To that end, let $p_m=\tr(M_m^\dag M_m\rho)$ and $q_m=\tr(M_m^\dag Q_mM_m\rho)$.
Define $\delta_m$ to be such that $q_m=(1-\delta_m)p_m$. Note that $0\leq\delta_m\leq 1$ because $0\leq q_m \leq p_m$ for every $m$. 
  Without loss of generality we assume $\tr(\rho)=1$.
Let $S(\rho)$ denote the collection of branches such that the precondition $(Q_m, \lambda)$ holds, i.e., $S(\rho)=\{m: q_m\geq\lambda p_m, \text{ i.e., } \delta_m \leq 1-\lambda\}$. 
We will determine a lower bound for $\sum_{m\in S(\rho)}p_m$ for each state $\rho$ using a probabilistic argument. 

First, note that $\{p_m\}$ is a probability distribution since $\sum_m p_m=1$ and $p_m\geq 0$ for each $m$.
Also, since (by our assumption) $\sum_m q_m\geq\lambda'\sum_m p_m=\lambda'$, we have that $\sum_m\delta_m p_m\leq (1-\lambda')$.
  Now we can define a random variable $\Delta$ to be such that $\Pr[\Delta=\delta_m]=p_m$ for each $m$.
  The expected value of $\Delta$ is $\mathbb{E}[\Delta]\leq(1-\lambda')$, and Markov's inequality yields
  \begin{align}
    \sum_{m\notin S(\rho)} p_m = \Pr[\Delta\geq 1-\lambda]\leq  \frac{\mathbb{E}[\Delta]}{1-\lambda}  \leq \frac{1-\lambda'}{1-\lambda}=:t.
  \end{align}
  This says that a weighted fraction $t$ of the branches will not satisfy the precondition $(Q_m, \lambda)$.
  Note that $t\in[0,1]$ since $\lambda\leq \lambda'$. %Now we proceed the soundness proof for the error parameters.
  For $m\notin S(\rho)$, only the trivial upper bound $\epsilon_m=1$ is guaranteed.
  Therefore, for each state $\rho$,
  \begin{align}
    \frac{1}{2}\nm{\sem{\widetilde{P}}\rho-\sem{P}\rho}_1
    &= \frac{1}{2} \nm{\sum_m \big(\sem{\widetilde{P_m}}(M_m\rho M_m^\dag) - \sem{P_m}(M_m\rho M_m^\dag)\big)}_1 \\
    &\leq \frac{1}{2} \sum_m\nm{\sem{\widetilde{P_m}}(M_m\rho M_m^\dag) - \sem{P_m}(M_m\rho M_m^\dag)}_1 \\
    &\leq \sum_{m\in S(\rho)}\tr(M_m\rho M_m^\dag)\epsilon+\sum_{m\notin S(\rho)}\tr(M_m\rho M_m^\dag) \\
    &\leq (1-t) \epsilon +t      = ((1-t) \epsilon +t )\tr(\rho).
  \end{align}
  Rewriting $\lambda=1-\delta$ for ease of notation, we have $\lambda'=1-t\delta$.
  Finally, we note that the case $t=0$ implies that the precondition of each branch is satisfied,
  and the error of the case statement can be bounded by $\epsilon$.

\item \emph{While-Bounded:} Let $P\equiv\qwhile{M[q]=1}{P_1}$. % and the error be $\epsilon$.
Let $S_k$ be the bounded while loop of $k$ iterations.
Define the linear maps $\E(\rho) := \sem{P_1}(M_1\rho M_1^\dag)$ and $\widetilde{\E}(\rho):=\sem{\widetilde{P_1}}(M_1\rho M_1^\dag)$ and
let $\sem{S_k}\rho=M_0\rho M_0^\dag+\sem{S_{k-1}}(\E(\rho))$ for $k\geq 1$ (with $\sem{S_0}\rho=\rho$).
In order to bound the distance between $\sem{P}$ and $\sem{\widetilde{P}}$, we first upper bound the distance between $\sem{S_k}$ and $\sem{\widetilde{S_k}}$ and then take the limit as $k\to\infty$.
We then have
\begin{align}
 \frac{1}{2} \nm{\sem{\widetilde{S_k}}\rho-\sem{S_k}\rho}_1
  &\leq \frac{1}{2}\nm{\sem{\widetilde{S}_{k-1}}(\widetilde{\E}(\rho))-\sem{S_{k-1}}(\E(\rho))}_1 \\\label{eq:while-1}
  &\leq \frac{1}{2} \nm{\widetilde{\E}(\rho)-\E(\rho)}_1+ \frac{1}{2} \nm{\sem{\widetilde{S}_{k-1}}(\E(\rho))-\sem{S_{k-1}}(\E(\rho))}_1 \\\label{eq:while-2}
  &\leq \frac{1}{2}\sum_{i=0}^{k-1}\nm{\widetilde{\E}(\E^i(\rho))-\E^{i+1}(\rho)} \\\label{eq:while-3}
  &\leq \epsilon\sum_{i=0}^{k-1}\tr(M_1^\dag M_1 \E^i(\rho)) \\\label{eq:while-4}
  &\leq n\epsilon\tr(M_1^\dag M_1\rho)\frac{1-a^{\lceil k/n\rceil}}{1-a}.
\end{align}
The second inequality \eq{while-1} follows from a technique similar to the one used in the proof of the \emph{Sequence} rule. 
We bound the first term in \eq{while-1} by $\epsilon\tr(M_1\rho M_1^\dag)$ by applying the premise $\rjudgment{Q}{\lambda}{P_1}{\epsilon}$ and the induction hypothesis.
We will prove the post-measurement state $M_1\rho M_1^\dagger$ indeed satisfies the precondition $(Q, \lambda)$ later.  
Similarly, each term in \eq{while-2} is bounded above by $\epsilon\tr(M_1^\dag M_1 \E^i(\rho))$ and thus the inequality in \eq{while-3} holds.

To establish the inequality in \eq{while-4}, let $b_i:=\tr(M_1^\dag M_1 \E^i(\rho))$. Then the sequence $\{b_k\}_k$ is non-negative and non-increasing.
We now prove an upper bound of the series.
Since $P$ is $(a,n)$-bounded, we know that
\begin{align}
  \tr(M_1^\dag M_1\E^n(\sigma)) = \tr((\E^*)^n(M_1^\dag M_1)\sigma)\leq  a \tr(M_1^\dag M_1\sigma), \text{ where } a<1
\end{align}
for every state $\sigma$, and therefore $b_{i+n}\leq a b_i$ for every $i$.
Since $b_k$ is non-increasing, we know that
\begin{align}
  \sum_{i=0}^{k-1} b_i
  &=\sum_{m=0}^{\lceil k/n \rceil-1} (b_{nm}+\ldots+b_{nm+n-1}) \\
  &\leq n\sum_{m=0}^{\lceil k/n\rceil-1}b_{nm}
  \leq n\sum_{m=0}^{\lceil k/n\rceil-1}a^m b_0
  =\frac{n(1-a^{\lceil k/n\rceil})b_0}{1-a}.
\end{align}
Thus the inequality in \eq{while-4} holds.
Since $b_0=\tr(M_1^\dag M_1\rho)\leq\tr(\rho)$,  we have
\begin{align}
 \frac{1}{2} \nm{\sem{\widetilde{S_k}}\rho-\sem{S_k}\rho}_1\leq \frac{n\epsilon(1-a^{\lceil k/n\rceil})}{1-a}\tr(\rho).
\end{align}
Taking the limit as $k\to\infty$, we have that $a^{\lceil k/n\rceil}\to 0$, which shows that $\frac{1}{2} \nm{\sem{\widetilde{S_k}}\rho-\sem{S_k}\rho}_1\leq \frac{n\epsilon}{1-a}\tr(\rho)$, as desired.

In order to apply the premise $\rjudgment{Q}{\lambda}{P_1}{\epsilon}$ to states of the form $M_1\E^i(\rho)M_1^\dag$, we need to show that $\tr(QM_1\E^i(\rho)M_1^\dag)\geq\lambda\tr(M_1\E^i(\rho)M_1^\dag)$ for each $i$.
We can prove this by induction.
For the base case $i=0$, by the precondition $(\lambda M_0^\dag M_0+M_1^\dag Q M_1, \lambda)$ on the input state to the loop, we have $\tr(M_1^\dag Q M_1\rho)\geq\lambda\tr(\rho)-\lambda\tr(M_0^\dag M_0\rho)=\lambda\tr(M_1^\dag M_1\rho)$.
Therefore, $\tr(QM_1\E^0(\rho)M_1^\dag)\geq\lambda\tr(M_1\E^0(\rho)M_1^\dag)$

For the inductive step, observe that the Hoare triple $\{Q\}P_1\{R\}$ yields
$\tr(R\sem{P_1}\sigma)\geq\tr(Q\sigma)$ for $R\equiv\lambda M_0^\dag M_0+M_1^\dag Q M_1$ and all states $\sigma$.
By the induction hypothesis, we have 
\[
\tr(R\sem{P_1}(M_1\E^i(\rho) M_1^\dag))\geq\tr(Q M_1 \E^i(\rho)M_1^\dag)\geq\lambda\tr( M_1 \E^i(\rho)M_1^\dag)\geq\lambda\tr(\E^{i+1}(\rho)),
\]
where the last inequality holds because quantum operations are trace-non-increasing (applied to $\sem{P_1}$). Note that $\sem{P_1}(M_1\E^i M_1^\dag) =\E^{i+1}$. Now by substituting $R$,
we have 
\begin{align}
  \tr(M_1^\dag Q M_1 \E^{i+1}(\rho))\geq \lambda\tr(\E^{i+1}(\rho))-\lambda\tr(M_0^\dag M_0 \E^{i+1}(\rho))=\lambda\tr(M_1^\dag M_1 \E^{i+1}(\rho)).
  \end{align}
Or equivalently, $ \tr( Q M_1 \E^{i+1}(\rho)M_1^\dag) \geq \lambda\tr( M_1 \E^{i+1}(\rho)M_1^\dag)$, which concludes the proof. 
\item \emph{While-Unbounded:} the proof is trivial as we use the trivial upper bound $1$. 
\end{enumerate}
\end{proof}

\section{Case Studies} \label{sec:case-studies}
In this section, we apply our robustness definition and logic to two example quantum programs: the quantum Bernoulli factory and the quantum walk.
We also demonstrate the (in)effectiveness of different error correction schemes on single-qubit errors and analyze the robustness of a fault-tolerant version of QBF.
Some computational details are deferred to the appendices. 

\subsection{Quantum Bernoulli Factory}\label{sec:qbf}
The quantum Bernoulli factory (QBF) \citep{DJR2015} is the quantum equivalent of the classical Bernoulli factory problem \citep{KO1994}.
In the classical Bernoulli factory problem, given a function $f:[0,1]\mapsto[0,1]$ and a coin that returns heads with unknown probability $p$, the goal is to simulate a new coin that returns head with probability $f(p)$.
In QBF, the goal is to generate the state $\ket{f(p)}$ given a description of $f$ and a \emph{quantum coin} described by the state
\begin{equation}\label{eq:inppp}
\ket{p}:=\sqrt{p}\ket{0}+\sqrt{1-p}\ket{1}.
\end{equation}

QBF is interesting because it can simulate a strictly larger class of functions $f$ than can be simulated by its classical counterpart.
One example of a function that can be simulated by QBF, but not by the classical Bernoulli factory, is the probability amplification function.
The key to simulating the probability amplification function is producing the state $\ket{f(p)}=(2p-1)\ket{0}+2\sqrt{p(1-p)}\ket{1}$.
This state can be prepared by (the ideal version of) the following program \citep{LY2017}:
\begin{align*}
  \widetilde{QBF}\equiv~ & q_1:=\ket{1};~~q_2:=\ket{1}; \\
                      & \qwhile{M[q_2]=1}{ \\
                      & \qquad q_1:=\ket{0};~~q_2:=\ket{0};\\
                      & \qquad q_1:\cong_{p_V,\Phi_V}V[q_1];~~q_2:\cong_{p_V,\Phi_V}V[q_2]; \\
                      & \qquad q_1,q_2:\cong_{p_U,\Phi_U}U[q_1,q_2]},
\end{align*}
where $M$ is the standard basis measurement $\{ \ket{0} \bra{0}, \ket{1}\bra{1} \}$.
The unitary $U$ is defined by
\begin{align}
  U = \ket{01}\bra{\phi^+}+\ket{00}\bra{\phi^-}+\ket{10}\bra{\psi^+}+\ket{11}\bra{\psi^-}
  % U\ket{\Phi^+}=\ket{01}; && U\ket{\Phi^-}=\ket{00}; && U\ket{\Psi^+}=\ket{10}; && U\ket{\Psi^-}=\ket{11}.
\end{align}
where $\ket{\phi^\pm} = \frac{1}{\sqrt{2}}(|00\rangle\pm|11\rangle)$ and $\ket{\psi^\pm} = \frac{1}{\sqrt{2}}(|01\rangle\pm|10\rangle)$.
The unitary $V:=\left[\begin{array}{cc}\sqrt{p} & -\sqrt{1-p} \\ \sqrt{1-p} & \sqrt{p}\end{array}\right]$ acting on $\ket{0}$ generates the state $\ket{p}$.
We denote the error due to noisy unitary application of $V$ and $U$ by
$\epsilon_V=p_V\nm{\Phi_V-V\circ V^\dag}_\diamond$ and $\epsilon_U=p_U\nm{\Phi_U-U\circ U^\dag}_\diamond$ respectively.
Note that to simplify our discussion we compute $\epsilon_V$ and $\epsilon_U$ using the trivial precondition $(I,0)$. 
Recall that $\nm{\cdot}_{I,0} = \nm{\cdot}_\diamond$.

Now we prove that $\rjudgment{I}{0}{\widetilde{QBF}}{4\epsilon_V+2\epsilon_U}$.
First, by the \emph{Sequence} and \emph{Unitary} rules, we bound the error in the loop body by $2\epsilon_V+\epsilon_U$.
To show $(a,n)$-boundedness of the loop, we must consider the behavior of $\E^*$ where 
\begin{align}
\E^*= (M_1\circ M_1^\dag)^*\circ\sem{q_1:=\ket{p};q_2:=\ket{p}}^*\circ  \sem{U[q_1,q_2]}^*.
\end{align}
We evaluate $\E^*(M_1^\dag M_1)$ as follows.
Recall that $M_1=I\otimes \ket{1}\bra{1}$.
\begin{align}
  M_1^\dag M_1=I\otimes\ket{1}\bra{1}\nonumber
  & \xmapsto{\sem{U[q_1,q_2]}^*} \ket{\phi^+}\bra{\phi^+}+\ket{\psi^-}\bra{\psi^-} \\
  & \xmapsto{\sem{q_1:=\ket{p};q_2:=\ket{p}}^*} \frac{1}{2}I\otimes I \\
  &\xmapsto{(M_1\circ M_1^\dag)^*} \frac{1}{2}M_1^\dag M_1.\label{eq:qbf-boundedness}
\end{align}
This shows that $\E^*(M_1^\dag  M_1)=\frac{1}{2}M_1^\dag M_1$.
Therefore, the while loop is $(\frac{1}{2},1)$-bounded.
Now we can use the \emph{While-Bounded} rule to conclude that the error bound for the loop is $\frac{2\epsilon_V+\epsilon_U}{1-\frac{1}{2}} = 4\epsilon_V+2\epsilon_U$,
and therefore $\rjudgment{I}{0}{(\qwhile{M[q_2]=1}{\widetilde{P_1}})}{4\epsilon_V+2\epsilon_U}$ where $\widetilde{P_1}$ is the body of the loop.
Two additional applications of the \emph{Sequence} rule conclude the proof.

We can compute $\epsilon_V$ and $\epsilon_U$ by identifying the error probabilities $p_V, p_U$ and error models $\Phi_V, \Phi_U$, and numerically calculating the diamond norm.
For example, consider the case where the state preparation is ideal, i.e., $p_V=0$,
and noise in the application of $U$ is characterized by $p_U=10^{-5}$ and $\Phi_U = ( \frac{X\circ X + Y\circ Y + Z\circ Z + I\circ I}{4})^{\otimes 2}$, %%\frac{I}{4}  $
the $4$-dimensional depolarizing channel. %with the depolarizing probability $p'=\frac{1}{2}$. %$\Phi_U=I\circ I$.
Then $\epsilon_V=0$ and $\epsilon_U=p_U\nm{\Phi_U-U\circ U^\dag}_\diamond$.
Applying an SDP solver \citep{dS15,W09} for the calculation of the diamond norm, we find that $\epsilon_U=1 \times 10^{-5} \times 0.9375 = 9.375 \times 10^{-6}$,
and therefore the error bound of $\widetilde{QBF}$ is $1.875\times 10^{-5}$.
More details can be found in \app{tedious1}.

\subsection{Quantum Walk on a Circle}

Here we analyze the error of the quantum walk algorithm introduced in \sec{quantum-while-language}.
The noisy quantum walk on a circle with $n$ points can be written as the following program:
\begin{align}
  \widetilde{QW}_n\equiv p:=\ket{0}; c:=\ket{L}; \qwhile{M[p]= 1}{c:\cong_{p_H,\Phi_H}H[c]; c,p:\cong_{p_S,\Phi_S}S[c,p]}.
\end{align}

Now we show that $\rjudgment{I}{0}{\widetilde{QW}_6}{30(\epsilon_H+\epsilon_S)}$ where $\epsilon_H=p_H\nm{\Phi_H-H\circ H^\dag}_\diamond$ and $\epsilon_S=p_S\nm{\Phi_S-S\circ S^\dag}_\diamond$ are the errors due to noisy application of $H$ and $S$ respectively.
First, by the \emph{Sequence} and \emph{Unitary} rules, we bound the error in the loop body by $\epsilon_H+\epsilon_S$.
Next, we numerically test increasing values of $(a,n)$ until we find a pair that satisfies \eq{anbounded}.
For this program, we find that the pair $(\frac{5}{6},5)$ satisfies the inequality, i.e., $(\E^*)^5(M_1^\dag M_1)\sqsubseteq \frac{5}{6}M_1^\dag M_1$ where $\E^*=(M_1^\dag\circ M_1)\circ\sem{c,p:=S[c,p]}^*\circ \sem{c:=H[c]}^*$.
This implies that the loop is $(\frac{5}{6},5)$-bounded.
Note that here, unlike in the previous example, we have computed the $(a,n)$ values numerically.
This may be useful in cases where direct deduction of $a$ and $n$ is difficult.
Now we can use the \emph{While-Bounded} rule to conclude that the error bound for the loop is $\frac{5(\epsilon_H+\epsilon_S)}{1-\frac{5}{6}} = 30(\epsilon_H+\epsilon_S)$,
and therefore $\rjudgment{I}{0}{(\qwhile{M[p]=1}{\widetilde{P_1}})}{30(\epsilon_H+\epsilon_S)}$ where $\widetilde{P_1}$ is the body of the loop.
Two additional applications of the \emph{Sequence} rule conclude the proof.

As an example, consider the program $\widetilde{QW}_6$ where only the Hadamard gate may be faulty, i.e., $p_S=0$. 
Say that noisy Hadamard application is characterized by $p_H=5\times 10^{-5}$ and $\Phi_H (\rho)=  \frac{X\circ X + Y\circ Y + Z\circ Z + I\circ I}{4} $, the $2$-dimensional depolarizing channel. 
Applying an SDP solver \citep{dS15,W09,W13} for the calculation of the diamond norm, we find that $\epsilon_H=5\times 10^{-5} \times 0.75 = 3.75 \times 10^{-5}$.
Therefore the error of $\widetilde{QW}_6$ is $30\epsilon_H = 1.125 \times 10^{-3}$.
More details can be found in \app{tedious1}.

\subsection{Error Correction}\label{sec:ex-qecc}

In this section we use our semantics to show that an error correction scheme that is appropriate for the error model can reduce noise in a program, while an inappropriate error correction scheme may do the opposite.  
We consider three programs: $P_1$, $P_2$, and $P_3$.
$P_1$ performs the identity operation, i.e., $P_1 \equiv q := I[q]$.
The noisy version of $P_1$, $\widetilde{P_1}$, allows error during the application of $I$.
$P_2$ and $P_3$ are semantically equivalent to $P_1$ (in the sense that they correspond to the same superoperator\footnote{To make this mathematically rigorous, one needs to distinguish between variables and ancillas in the program. The semantics refers to the superoperator that traces out the ancilla part.}), but both employ error correction.
$P_2$ uses a three-qubit repetition code that can correct a bit flip ($X$ error) on a single qubit and
$P_3$ uses a three-qubit repetition code that can correct a phase flip ($Z$ error) on a single qubit.
$P_2$ and $P_3$ both have the following form: 
\begin{align*}
  &\overline{q} := ENCODE[q]; \\
  &\overline{q} := \overline{I}[\overline{q}]; \\
  &\overline{q} := CORRECT[\overline{q}]; \\
  &q := DECODE[\overline{q}]
\end{align*}
where we use $ENCODE$ as a stand-in for the operations associated with turning the qubit $q$ into its encoded counterpart $\overline{q}$, $CORRECT$ as a stand-in for the operations associated with syndrome detection and error correction, and $DECODE$ as a stand-in for the operations associated with converting the encoded qubit $\overline{q}$ back into $q$. 
We use $\overline{I}$ to represent a fault-tolerant version of the identity operation.
In the noisy programs $\widetilde{P_2}$ and $\widetilde{P_3}$, we only allow error during application of $\overline{I}$.
% \shh{Make statement on the assumption.}
To simplify the calculation, we will assume that encoding, decoding, and error correction are all ideal (i.e. they have an error probability of 0).
For a description of $ENCODE$, $CORRECT$, and $DECODE$ see \app{rep-codes}.

For all three programs, we define the error model acting on a single qubit by $\Phi(\rho) = X\rho X$.
With this error model, application of an $I$ gate to a qubit will succeed with probability $1-p$, and will instead become an application of a $X$ gate with probability $p$.
Now we can use our definition of the denotational semantics from \sec{noisy-progs} to directly compute $\sem{\widetilde{P_1}}$, $\sem{\widetilde{P_2}}$, and $\sem{\widetilde{P_3}}$.
We find that:
\begin{align}
  \sem{\widetilde{P_1}}\rho &= (1-p) I\rho I + p X\rho X \\
  \sem{\widetilde{P_2}}\rho &= ((1-p)^3 + 3p(1-p)^2) I\rho I + (3p^2(1-p) + p^3) X\rho X \\
  \sem{\widetilde{P_3}}\rho &= ((1-p)^3 + 3p^2(1-p)^2) I\rho I + (3p(1-p) + p^3) Z\rho Z
\end{align}
Now we can directly compute $\nm{\sem{P_1} - \sem{\widetilde{P_1}}}_{\diamond}$, $\nm{\sem{P_2} - \sem{\widetilde{P_2}}}_{\diamond}$, and $\nm{\sem{P_3} - \sem{\widetilde{P_3}}}_{\diamond}$ to determine error rates for these programs. 
We find that:
\begin{align}
  \nm{\sem{P_1} - \sem{\widetilde{P_1}}}_{\diamond} &= p \\
  \nm{\sem{P_2} - \sem{\widetilde{P_2}}}_{\diamond} &= 3p^2 - 2p^3 \\
  \nm{\sem{P_3} - \sem{\widetilde{P_3}}}_{\diamond} &= 3p(1-p)^2 + p^3
\end{align}
This allows us to conclude that, under $Q=I$ and $\lambda=0$,
\[\widetilde{P_1}~\text{is $p$ robust} \qquad \widetilde{P_2}~\text{is $3p^2 - 2p^3$ robust} \qquad \widetilde{P_3}~\text{is $3p(1-p)^2 + p^3$ robust}\]
Note that for $0 < p < \frac{1}{2}$, we have that $3p^2 - 2p^3 < p$ and $p < 3p(1-p)^2 + p^3$, which tells us that, in the presence of $X$ errors, correcting for bit flips will improve the error rate while correcting for phase flips will make the error rate worse.
For computational details see \app{ec-example}.

Here we invoke the semantics of the program to prove quantum robustness by definition. 
This is an expensive calculation.
However, it is necessary to account for the effect of error correction. 
Ideally, we imagine a combination of uses of the semantics and the rules to trade off between the cost of the calculation and the accuracy of the bounds.

\subsection{Fault-tolerant Quantum Bernoulli Factory}\label{sec:ft-qbf}

In this section, we consider a fault-tolerant implementation of the quantum Bernoulli factory.
The fault-tolerant $QBF$, denoted by $\overline{QBF}$, can be written as follows.
\begin{align*}
  \widetilde{\overline{QBF}}\equiv & q_1:=\ket{1};~~q_2:=\ket{1}; \\
                                    & \qwhile{M[q_2]=1}{ \\
                        & \qquad q_1:=\ket{0};~~q_2:=\ket{0}; \\
                        & \qquad \overline{q}_1:=ENCODE[q_1];~~\overline{q}_2:=ENCODE[q_2]; \\
                        & \qquad \overline{q}_1:\cong_{1,\Phi_{\overline{V}}}\overline{V}[\overline{q}_1]; \\
                        & \qquad \overline{q}_1:=CORRECT[\overline{q}_1]; \\
                        & \qquad \overline{q}_2:\cong_{1,\Phi_{\overline{V}}}\overline{V}[\overline{q}_2]; \\
                        & \qquad \overline{q}_2:=CORRECT[\overline{q}_2]; \\
                        & \qquad \overline{q}_1;\overline{q}_2 :\cong_{1,\Phi_{\overline{U}}}\overline{U}[\overline{q}_1,\overline{q}_2];\\
                        & \qquad \overline{q}_1:=CORRECT[\overline{q}_1];~~
                          \overline{q}_2:=CORRECT[\overline{q}_2]; \\
                        & \qquad q_1:=DECODE[\overline{q}_1];~~q_2:=DECODE[\overline{q}_2]\\
                        & \qquad  }
\end{align*}
where $ENCODE$, $CORRECT$, and $DECODE$ have the same meanings as in the previous section.
As before, we assume that encoding, decoding, and error correction are not affected by noise. %\xw{comment more}
$\overline{V}$ and $\overline{U}$ are the fault-tolerant operators that correspond to $V$ and $U$ respectively.
We define the error model associated with $\overline{V}$ by $\Phi_{\overline{V}}(\rho)=\Phi_V^{\otimes 3}(\overline{V}\rho\overline{V}^\dag)$ where $\Phi_V(\rho) = (1-p_V)I\rho I + p_V X\rho X$.
Similarly, we define the error model associated with $\overline{U}$ by $\Phi_{\overline{U}}(\rho)=\Phi_U^{\otimes 6}(\overline{U}\rho\overline{U}^\dag)$ where $\Phi_U(\rho) = (1-p_U)I\rho I + p_U X\rho X$.
Note that because we are using a three-qubit repetition code, $\overline{V}$ and $\overline{U}$ will be 3-qubit and 6-qubit unitaries respectively. 
These noise models are applied with probability one.

Using the definitions from \sec{noisy-progs} we can show that for $i\in\{1,2\}$,
\begin{align}\label{eq:v-sem}
  \sem{
        \overline{q}_i:\cong_{1,\Phi_{\overline{V}}}\overline{V}[\overline{q}_i];~~
        \overline{q}_i:=CORRECT[\overline{q}_i]}\rho
          =(1-q_V)\overline{V}\rho \overline{V}^\dag+q_VX^{\otimes 3}\overline{V}\rho \overline{V}^\dag X^{\otimes 3},
\end{align}
where $q_V=3p_V^2(1-p_V)+p_V^3$.
Note that \eq{v-sem} is obtained through a calculation similar to the one used to compute $\sem{P_2}$ in \sec{ex-qecc}. 
This says applying $\overline{V}$ followed by error correction is equivalent to applying $\overline{V}$ with probability $1 - q_V$, and applying $\overline{V}$ followed by $X^{\otimes 3}$ with probability $q_V$.
Thus we can use the \emph{Unitary} rule to compute the error of the noisy application of $\overline{V}$ followed by error correction:
\begin{align}
  \epsilon_{\overline{V}}
  &=q_V\nm{(X^{\otimes 3}\circ X^{\otimes 3})\circ(\overline{V}\circ\overline{V}^\dag)-(\overline{V}\circ\overline{V}^\dag)}_\diamond
  =q_V\nm{(X^{\otimes 3}\circ X^{\otimes 3})-(I\circ I)}_\diamond = q_V.
\end{align}
The final equality holds because the unitaries are perfectly distinguishable.
Note that $\nm{\Phi(U\circ U^\dag)-U\circ U^\dag}_{\diamond}=\nm{\Phi-I\circ I}_\diamond$ holds for any superoperator $\Phi$ and unitary $U$.
Similarly, 
\begin{align}
  \sem{
    & \overline{q}_1,\overline{q}_2:\cong_{1,\Phi_{\overline{U}}}\overline{U}[\overline{q}_1,\overline{q}_2];~~
    \overline{q}_1:=CORRECT[\overline{q}_1];~~
    \overline{q}_2:=CORRECT[\overline{q}_2]
      }\rho \\
    &\qquad =((1-q_U)I\circ I+q_U X^{\otimes 3}\circ X^{\otimes 3})^{\otimes 2}(\overline{U}\rho \overline{U}^\dag),
\end{align}
where $q_U=3p_U^2(1-p_U)+p_U^3$.
Thus, by the \emph{Unitary} rule, the error of the noisy application of $\overline{U}$ followed by error correction is
\begin{align}
  \epsilon_{\overline{U}} &= \nm{((1-q_U)I\circ I+q_U X^{\otimes 3}\circ X^{\otimes 3})^{\otimes 2}-I\circ I}_\diamond \\\label{eq:ft-error-u}
             &= \nm{q_U^2(X^{\otimes 6}\circ X^{\otimes 6})+q_U(1-q_U)(I^{\otimes 3}\otimes X^{\otimes 3})\circ(I^{\otimes 3}\otimes X^{\otimes 3}) \\
             &\qquad +q_U(1-q_U)(X^{\otimes 3}\otimes I^{\otimes 3})\circ(X^{\otimes 3}\otimes I^{\otimes 3})-(2q_U-q_U^2)I\circ I }_\diamond \\
             & = 2q_U-q_U^2.
\end{align}
The last equality above can be proved as follows.
For ease of notation, we define the superoperator
\begin{align}
  \E(\rho):=& \frac{1}{2q_U-q_U^2}(q_U^2(X^{\otimes 6}\rho X^{\otimes 6})+q_U(1-q_U)(I^{\otimes 3}\otimes X^{\otimes 3})\rho(I^{\otimes 3}\otimes X^{\otimes 3}) \\
  & \qquad\qquad +q_U(1-q_U)(X^{\otimes 3}\otimes I^{\otimes 3})\rho(X^{\otimes 3}\otimes I^{\otimes 3})).
\end{align}
The superoperator $\E$ is trace-preserving, so $\epsilon_{\overline{U}}=(2q_U-q_U^2)\nm{\E-I\circ I}_\diamond\leq (2q_U-q_U^2)$.
To show $\epsilon_{\overline{U}}$ is also lower bounded by $2q_U-q_U^2$, it suffices to consider the input state $\rho=(\ket{0}\bra{0})^{\otimes 6}$ and the projector $\Pi=I-(\ket{0}\bra{0})^{\otimes 6}$.
By definition, $\epsilon_{\overline{U}}=(2q_U-q_U^2)\nm{\E-I\circ I}_\diamond\geq (2q_U-q_U^2)\tr(\Pi(\E(\rho)-\rho))=2q_U-q_U^2$.

Now using the argument given in \sec{qbf}, we can show that $\rjudgment{I}{0}{\widetilde{\overline{QBF}}}{4\epsilon_{\overline{V}}+2\epsilon_{\overline{U}}}$.
Note that $ENCODE$ and $DECODE$ do not impact the robustness.
Without error correction, as shown in \sec{qbf}, $\rjudgment{I}{0}{\widetilde{QBF}}{4\epsilon_V+2\epsilon_U}$ where $\epsilon_V=\nm{\Phi_V-I\circ I}_\diamond=p_V$ and
$\epsilon_U=\nm{\Phi_U^{\otimes 2}-I\circ I}_\diamond=2p_U-p_U^2$.
In the case where the error rate is constant, i.e., $p_V=p_U=p$ for some probability $p$,
$\rjudgment{I}{0}{\widetilde{QBF}}{O(p)}$ and $\rjudgment{I}{0}{\widetilde{\overline{QBF}}}{O(p^2)}$.
This shows that the error of $\widetilde{QBF}$ is suppressed by a factor of $p$ with a fault-tolerant implementation of the loop body.

\begin{remark}
Throughout our example, we make the assumption that operations like $ENCODE$, $CORRECT$ and $DECODE$ are noise-free. 
In the actual setting of fault-tolerant quantum computation, they can also contain noise.   
In order to suppress the error rate, one may need to use the construction of fault-tolerant gadgets in the proof of the threshold theorem \citep{AB97}. 
We remark that the assumption we made is just to simplify the example and ease presentation. 
It would in fact be possible to prove a threshold theorem in our formalism without this assumption, although the calculation could be much more complicated. 
It is an interesting open question to see whether one can simplify this calculation by adding more rules to our logic. 
\end{remark}

\section{Conclusions and Future Work}

We have presented a semantics for describing quantum computation with errors and an analysis that bounds the distance between the result of a noisy program and its corresponding ideal program on the same input.
We used our analysis to compute error bounds for noisy versions of the quantum Bernoulli factory and quantum walk programs.
We also showed how our analysis can be used to compute the error bounds for small circuits with and without error correction, showing examples of when using error correction is beneficial and when there are tradeoffs between the efficiency of error corrections and related costs. 

A natural next step for our work is to encode the rules from \sec{noisy-rules} in a proof assistant so that they can be applied in an automated fashion.
We have shown that the $(Q,\lambda)$-diamond norm can be computed by an SDP, and the $(a,n)$ values for loop boundedness can be computed analytically or numerically, so implementing our rules is feasible.
Given an implementation, we are interested how our analysis may be used to construct circuits with lower error rates.
This application is inspired by work by \citet{Misailovic14}, which uses classical reliability analysis to determine which operations can be replaced by their noisy counterparts.
It could hence be used to inform decisions about which implementations of quantum algorithms are practical for near-term use.

%% Acknowledgments
\begin{acks}                            %% acks environment is optional
We would like to thank Andrew Childs for helpful discussions on quantum walks. 
This material is based upon work supported by the 
\grantsponsor{ascr}{U.S. Department of Energy, Office of Science, 
Office of Advanced Scientific Computing Research}{http://dx.doi.org/10.13039/100006192}  
Quantum Testbed Pathfinder Program under Award Number \grantnum{ascr}{DE-SC0019040},
and the \grantsponsor{ciar}{Canadian Institute for Advanced Research}{http://dx.doi.org/10.13039/100007631}.

%   %% Commands \grantsponsor{<sponsorID>}{<name>}{<url>} and
%   %% \grantnum[<url>]{<sponsorID>}{<number>} should be used to
%   %% acknowledge financial support and will be used by metadata
%   %% extraction tools.
%   This material is based upon work supported by the
%   \grantsponsor{GS100000001}{National Science
%     Foundation}{http://dx.doi.org/10.13039/100006192} under Grant
%   No.~\grantnum{GS100000001}{nnnnnnn} and Grant
%   No.~\grantnum{GS100000001}{mmmmmmm}.  Any opinions, findings, and
%   conclusions or recommendations expressed in this material are those
%   of the author and do not necessarily reflect the views of the
%   National Science Foundation.

\end{acks}

%% Bibliography
\bibliography{references}

% \iftechrep
%% Appendix
\appendix

\newpage
 \section{Appendix}\label{sec::appendix}

\subsection{Proof that the \texorpdfstring{$(Q,\lambda)$-diamond}{(Q,lambda)-diamond} Norm is a Seminorm} \label{app:Qlambda}
Here we prove that $(Q, \lambda)$-diamond norm is a seminorm. Recall its definition. 

\begin{definition}[$(Q,\lambda)$-diamond norm] 
Given superoperators $\E$, $\E'$, quantum predicate $Q$ over $\H$, and $0\leq \lambda \leq 1$, the $(Q, \lambda)$-diamond norm between $\E$ and $\E'$, denoted $\nm{\E-\E'}_{Q,\lambda}$,  is defined by
\begin{align}
  \nm{\E-\E'}_{Q,\lambda} \equiv \max_{\rho\in\D(\H\otimes\mathcal{A})~:~\tr(\rho) = 1,~\tr(Q\rho)\geq\lambda} \tdist{\E\otimes I_{\mathcal{A}}(\rho)}{\E'\otimes I_{\mathcal{A}}(\rho)},
\end{align}
where $\A$ is any auxiliary space and can be assumed to be $\H$ without loss of generality.
\end{definition}

\begin{proof}
Recall from \eq{tracedist} that $0\leq \tdist{\E\otimes I_{\mathcal{A}}(\rho)}{\E'\otimes I_{\mathcal{A}}(\rho)}\leq 1$ for any quantum state $\rho$, so $ \nm{\E-\E'}_{Q,\lambda}\geq 0$. 

\begin{enumerate}
\item \emph{Positive scalability:}
This property is inherited directly from the diamond norm. One may also observe that for any $\alpha\in\mathbb{C} $,
\begin{align*}
\tdist{\E\otimes I_{\mathcal{A}}(\alpha\rho)}{\E'\otimes I_{\mathcal{A}}(\alpha\rho)} &= \tdist{\alpha(\E\otimes I_{\mathcal{A}}(\rho))}{\alpha(\E'\otimes I_{\mathcal{A}}(\rho))}\\& = \alpha\tdist{\E\otimes I_{\mathcal{A}}(\rho)}{\E'\otimes I_{\mathcal{A}}(\rho)}.
\end{align*}

\item \emph{Triangle Inequality:}
For quantum superoperators  $\E,\E', \F,\F'$, we have 
\begin{align*}
\nm{(\E-\E')+(\F-\F')}_{Q,\lambda} &= \max_{ \substack{\rho\in\D(\H\otimes\mathcal{A})~:\\\tr(\rho) = 1,~\tr(Q\rho)\geq\lambda} } \tdist{(\E+\F)\otimes I_{\mathcal{A}}(\rho)}{(\E'+\F')\otimes I_{\mathcal{A}}(\rho)}.\\ 
\end{align*}

Suppose the maximal value on the right hand side is attained at a state $\rho^*\in\D(\H\otimes\mathcal{A})$. Note that, by definition, this requires $\rho^*$ to satisfy $\tr(\rho^*) = 1, \tr(Q\rho^*)\geq\lambda$.  Then,

\begin{align*}
\nm{(\E-\E')+(\F-\F')}_{Q,\lambda} &= \tdist{(\E+\F)\otimes I_{\mathcal{A}}(\rho^*)}{(\E'+\F')\otimes I_{\mathcal{A}}(\rho^*)}\\
&= \max_{0\sqsubseteq P\sqsubseteq I} \tr(P((\E+\F)\otimes I_{\mathcal{A}}(\rho^*) - (\E'+\F')\otimes I_{\mathcal{A}}(\rho^*)))\\
&=\max_{0\sqsubseteq P\sqsubseteq I} \tr(P((\E-\E')\otimes I_{\mathcal{A}}(\rho^*) + (\F-\F')\otimes I_{\mathcal{A}}(\rho^*)))\\
&\leq \max_{0\sqsubseteq P_1\sqsubseteq I} \tr(P_1((\E-\E')\otimes  I_{\mathcal{A}}(\rho^*))) + \max_{0\sqsubseteq P_2\sqsubseteq I}\tr(P_2(\F-\F')\otimes I_{\mathcal{A}}(\rho^*)))\\
&\leq \max_{\substack{\rho_1\in\D(\H\otimes\mathcal{A})~:\\\tr(\rho_1) = 1,~\tr(Q\rho_1)\geq\lambda}}\Big(\max_{0\sqsubseteq P_1\sqsubseteq I} \tr(P_1((\E-\E')\otimes I_{\mathcal{A}}(\rho_1)))\Big)\\
&\qquad + \max_{\substack{\rho_2\in\D(\H\otimes\mathcal{A})~:\\\tr(\rho_2) = 1,~\tr(Q\rho_2)\geq\lambda}}\Big(\max_{0\sqsubseteq P_2\sqsubseteq I} \tr(P_2((\F-\F')\otimes I_{\mathcal{A}}(\rho_2)))\Big)\\
&=\max_{\substack{\rho_1\in\D(\H\otimes\mathcal{A})~:\\\tr(\rho_1) = 1,~\tr(Q\rho_1)\geq\lambda}}\tdist{\E\otimes I_{\mathcal{A}}(\rho_1)}{\E'\otimes I_{\mathcal{A}}(\rho_1)} \\
&\qquad + \max_{\substack{\rho_2\in\D(\H\otimes I_\mathcal{A})~:\\\tr(\rho_2) = 1,~\tr(Q\rho_2)\geq\lambda}}\tdist{\F\otimes I_{\mathcal{A}}(\rho_2)}{\F'\otimes I_{\mathcal{A}}(\rho_2)}\\
&= \nm{\E-\E'}_{Q,\lambda} +  \nm{\F-\F'}_{Q,\lambda}. 
\end{align*}

\end{enumerate}
\end{proof}

\subsection{Computation Details for the \texorpdfstring{$\widetilde{QBF}$}{QBF} and \texorpdfstring{$\widetilde{QW_6}$}{QW} Case Studies}\label{app:tedious1}
First note that an $n$-qubit depolarizing channel can be written in Kraus form as
$\frac{1}{2^{2n}}(I\circ I+X\circ X+Y\circ Y+Z\circ Z)^{\otimes n}$.
To numerically compute the error $\epsilon_U$ for $\widetilde{QBF}$,
one first computes the Choi-Jamiolkowski representation of $\Phi_U-U\circ U^\dag$:  
\begin{align}
  J(\Phi_U-U\circ U^\dag) =
  & \left(\frac{1}{4}\sum_{i,k\in\{0,1\}} (X\ket{i}\bra{k} X + Y\ket{i}\bra{k} Y + Z\ket{i}\bra{k} Z + \ket{i}\bra{k})\otimes \ket{i}\bra{k}\right)^{\otimes 2}\\
  & - \sum_{i,j,k,l\in\{0,1\}}U\ket{ij}\bra{kl} U^\dag\otimes \ket{ij}\bra{kl}.\label{eq:choochoo}
\end{align}
Using the matrix found in $\eq{choochoo}$, 
we numerically compute the diamond norm \citep{dS15,W09} and obtain $\epsilon_U =p_U\nm{\Phi_U-U\circ U^\dag}_\diamond =1 \times 10^{-5} \times 0.9375$.
Thus the error of the entire loop is $2\epsilon_U = 1.875\times 10^{-5}$. 

For $\widetilde{QW_6}$, we compute the Choi-Jamiolkowski representation of $\Phi_H-H\circ H^\dag$: 

\begin{align}
  J(\Phi_H-H\circ H^\dag)
  =& \sum_{i,j\in \{0,1\}}\left(\frac{1}{4}\left(X\ket{i}\bra{j}X + Y\ket{i}\bra{j}Y + Z \ket{i}\bra{j}Z + \ket{i}\bra{j}\right)-H(\ket{i}\bra{j})H\right)\otimes \ket{i}\bra{j}.\label{eq:choochoo22}
\end{align}
Using the matrix found in \eq{choochoo22}, we compute the diamond norm \citep{dS15,W09} 
and obtain $\epsilon_H =p_H\nm{\Phi_H-H\circ H^\dag}_\diamond =5 \times 10^{-5} \times 0.75= 3.75 \times 10^{-5}$.
Thus the error of the entire loop is $30\epsilon_H =1.125\times 10^{-3}$. 

\subsection{Description of Quantum Repetition Codes} \label{app:rep-codes}

Quantum repetition codes are the quantum equivalent of classical repetition codes.
In a classical 3-bit repetition code, 0 is encoded as 000 and 1 is encoded as 111.
To do error correction, you can take a majority vote of the three bits (e.g. MAJ(0,0,1) = 0, so 001 will be corrected to 000).
Alternatively, you can use two \emph{parity checks} that indicate whether the first and second, and second and third, bits agree.
You can then use this information to do correction (e.g. if the first and second bits do not agree, but the second and third bits do, then the first bit must be in error).

\begin{figure}
\begin{align*}
  & \qquad q_2:=\ket{0};~~q_3:=\ket{0};~~q_4:=\ket{0};~~q_5:=\ket{0}; & &\text{\bf Initialization}\\
  & \qquad q_1,q_2:= CNOT[q_1,q_2]; & &\text{\bf Encoding}\\
  & \qquad q_1,q_3:=CNOT[q_1,q_3];\\
  & \qquad q_1,q_2,q_3:={I}^{\otimes 3}[q_1,q_2,q_3]; & &\text{\bf Fault-tolerant gate application}\\
  & \qquad q_1,q_4:= CNOT[q_1,q_4]; & &\text{\bf Syndrome detection}\\
  & \qquad q_2,q_4:= CNOT[q_2,q_4];\\
  & \qquad q_2,q_5:= CNOT[q_2,q_5];\\
  & \qquad q_3,q_5:= CNOT[q_3,q_5];\\
  &\qquad q_4,q_5,q_2 := 11-TOFFOLI[q_4,q_5,q_2]; & &\text{\bf Error correction}\\
  &\qquad q_4,q_5,q_3 := 01-TOFFOLI[q_4,q_5,q_3];\\
  &\qquad q_4,q_5,q_1 := 10-TOFFOLI[q_4,q_5,q_1];\\
  &\qquad q_4 := \ket{0};~~q_5 := \ket{0}; & &\text{\bf Syndrome clean-up}\\
  & \qquad q_1,q_3:=CNOT[q_1,q_3]; & &\text{\bf Decoding}\\
  & \qquad q_1, q_2:=CNOT[q_1,q_2];\\
\end{align*}
\caption{Identity operation protected by a quantum repetition code that can correct one bit flip ($X$) error. }
\label{fig:ec-example}
\end{figure}

In the quantum case, to construct a code that corrects a single bit flip (i.e. $X$ error), we encode the state $\ket{\psi} = \alpha\ket{0} + \beta\ket{1}$ as $\ket{\overline{\psi}} = \alpha\ket{000} + \beta\ket{111}$.
We do error correction using the results of \emph{syndrome measurements}, which are similar to parity checks.
The program in \fig{ec-example} shows how to use the bit flip repetition code to protect an application of the identity gate.
Note that this is the program $\widetilde{P_2}$ from \sec{ex-qecc}.

On the right-hand side of the program in \fig{ec-example}, we have labeled every stage of the computation with its purpose.
The first line of the program (\emph{Initialization}) initializes qubits $q_2, q_3, q_4,$ and $q_5$ to the $\ket{0}$ state.
Assuming that qubit $q_1$ is initially in state $\alpha\ket{0} + \beta\ket{1}$, the state of system after the first line can be written as $(\alpha\ket{000} + \beta\ket{100})\ket{00}$.
The next two lines (\emph{Encoding}) apply $CNOT$ gates, which are the quantum equivalent of classical XOR gates, to construct the state $(\alpha\ket{000} + \beta\ket{111})\ket{00}$.
Note that at this point, qubits $q_1, q_2,$ and $q_3$ form a codeword.
The next line (\emph{Fault-tolerant gate application}) applies a fault-tolerant version of the identity gate, which in this case is simply $I^{\otimes 3}$.
The next four lines perform \emph{Syndrome detection}, and the three lines after that perform \emph{Error correction} based on the detected syndromes.
The next line (\emph{Syndrome clean-up}) resets the syndrome qubits to the $\ket{0}$ state,
and the final two lines (\emph{Decoding}) return the state $(\alpha\ket{000} + \beta\ket{111})\ket{00}$ to $(\alpha\ket{000} + \beta\ket{100})\ket{00}$.
The $ENCODE$ notation from \sec{ex-qecc} and \sec{ft-qbf} corresponds to the initialization and encoding stages. 
The $CORRECT$ notation corresponds to the syndrome detection and error correction stages.
The $DECODE$ notation corresponds to the syndrome clean-up and decoding stages.

The program from \fig{ec-example} can be represented more succinctly by the quantum circuit shown in \fig{bit-flip-ec}.
Note that in quantum circuits each wire corresponds to a qubit, and gates are applied from left to right.
Vertical lines between wires represent multi-qubit gates.

\begin{figure}
  \mbox{
    \Qcircuit @C=1em @R=1em {
    \ket{q_1}&& \ctrl{1} & \ctrl{2} &\multigate{2}{\overline{I}}& \ctrl{3} &\qw &\qw&\qw  &\qw&\qw&\targ&\qw&\qw  & \ctrl{2}& \ctrl{1}&\qw\\
    \ket{0}&& \targ& \qw &\ghost{I}& \qw& \ctrl{2}&\ctrl{3}&\qw &\targ&\qw&\qw&\qw &\qw& \qw & \targ&\qw\\
    \ket{0}&& \qw & \targ & \ghost{I}&\qw & \qw &\qw&\ctrl{2}&\qw&\targ&\qw&\qw&\qw & \targ&\qw&\qw \\
    \ket{0}&& \qw & \qw &\qw &\targ &\targ&\qw&\qw &\ctrl{-2}&\ctrlo{-1}&\ctrl{-3}&\qw&\measuretab{q_4:=\ket{0}}&&&\\
    \ket{0}&& \qw & \qw & \qw&\qw& \qw &\targ&\targ&\ctrl{-3}&\ctrl{-1}&\ctrlo{-4}&\qw &\measuretab{q_5:=\ket{0}}&&&\
    \gategroup{1}{5}{3}{5}{.7em}{--}
    }
  }
  \caption{Identity operation protected by a quantum repetition code that can correct one bit flip ($X$) error. This is program $\widetilde{P_2}$ from \sec{ex-qecc}.}
  \label{fig:bit-flip-ec}
\end{figure}

The repetition code that corrects phase flip errors instead of bit flip errors is similar to the code already discussed, so we will not describe it in detail.
We only present the relevant circuit (\fig{phase-flip-ec}). 
We refer interested readers to \citet[Section 10.1]{MI2002}.

\begin{figure}
  \mbox{
    \Qcircuit @C=1em @R=1em {
    \ket{q_1}&& \ctrl{1} & \ctrl{2} &\gate{H}&\multigate{2}{I}&\gate{H}& \ctrl{3} &\qw &\qw&\qw  &\qw&\qw&\targ&\qw&\qw  & \ctrl{2}& \ctrl{1}&\qw\\
    \ket{0}&& \targ& \qw &\gate{H}&\ghost{I}&\gate{H}& \qw& \ctrl{2}&\ctrl{3}&\qw &\targ&\qw&\qw&\qw &\qw& \qw & \targ&\qw\\
    \ket{0}&& \qw & \targ &\gate{H}& \ghost{I}&\gate{H}&\qw & \qw &\qw&\ctrl{2}&\qw&\targ&\qw&\qw&\qw & \targ&\qw&\qw \\
    \ket{0}&& \qw & \qw&\qw &\qw &\qw&\targ &\targ&\qw&\qw &\ctrl{-2}&\ctrlo{-1}&\ctrl{-3}&\qw&\measuretab{q_4:=\ket{0}}&&&\\
    \ket{0}&& \qw & \qw&\qw & \qw&\qw&\qw& \qw &\targ&\targ&\ctrl{-3}&\ctrl{-1}&\ctrlo{-4}&\qw &\measuretab{q_5:=\ket{0}}&&&\
    \gategroup{1}{6}{3}{6}{.7em}{--}
    }
  }
  \caption{Identity operation protected by a quantum repetition code that can correct one phase flip ($Z$) error. This is program $\widetilde{P_3}$ from \sec{ex-qecc}.}
  \label{fig:phase-flip-ec}
\end{figure}

\subsection{Computation Details for the Error Correction Example}\label{app:ec-example}

In this section, we show how to compute the values for $\nm{\sem{P_1}-\sem{\widetilde{P_1}}}_{\Diamond}$, $\nm{\sem{P_2}-\sem{\widetilde{P_2}}}_{\Diamond}$, and $\nm{\sem{P_3}-\sem{\widetilde{P_3}}}_{\Diamond}$ presented in \sec{ex-qecc}.
To begin, observe that 
\begin{align}
\nm{\sem{P_1}-\sem{\widetilde{P_1}}}_{\Diamond} &= \nm{((1-p)I\circ I + pX\circ X) - I\circ I}_{\Diamond}\\
& = p\nm{I\circ I - X\circ X}_{\Diamond}\\
& = p\label{eq:IXperfdisting}
\end{align}
where $\eq{IXperfdisting}$ follows from the fact that $I$ and $X$ are perfectly distinguishable (consider $\nm{X\ket{0}-I\ket{0}}_{1} = 1$). 

Now we consider $\nm{\sem{P_2}-\sem{\widetilde{P_2}}}_{\Diamond}$. 
For program $\widetilde{P_2}$, the denotation of the noisy version of $\overline{q} := \overline{I}[\overline{q}]$ is given by
\begin{align}
& ((1-p) I\circ I + pX_1\circ X_1)\otimes ((1-p) I\circ I + pX_2\circ X_2) \otimes ((1-p) I\circ I + pX_3\circ X_3)\\
& \qquad = (1-p)^3 I\circ I + \sum_{i=1}^3 p(1-p)^2 X_i\circ X_i \nonumber\\ 
& \qquad\quad + \sum_{i,j\in \{1,2,3\}, i < j} p^2(1-p) X_iX_j\circ X_iX_j + p^3 X_1X_2X_3\circ X_1X_2X_3
\end{align} 
where $X_i$ denotes an $X$ error on qubit $q_i$. After error correction, this channel becomes
\[ ((1-p)^3 + 3p(1-p)^2) I\circ I + ( 3p^2(1-p)^2 + p^3)X_1X_2X_3\circ X_1X_2X_3. \]
The reason that the final channel is not the identity is that the three-qubit bit flip repetition code can only correct one $X$ error.
If zero or one $X$ errors occurs, then error correction will perfectly restore the state.
However, if more than one $X$ error occurs, error correction will ``correct'' the state $\rho$ to be $X_1X_2X_3\rho X_1X_2X_3$.

Recall from \app{rep-codes} that the state before applying the identity operation is $\alpha\ket{000} + \beta\ket{111}$. 
Hence, after correction, with probability $(1-p)^3 + 3p(1-p)^2$ the state remains unchanged, while with probability $3p^2(1-p) + p^3$ the state evolves to $\alpha\ket{111} + \beta\ket{000}$. 
After decoding, $q_2$ and $q_3$ are reset to $\ket{0}$ in either case.
This means that
\begin{align}
\sem{\widetilde{P_2}} &= ((1-p)^3 + 3p(1-p)^2) I\circ I + (3p^2(1-p) + p^3) X_1\circ X_1. \label{eq:P2-sem}
\end{align} 
It follows that 
\begin{align}
\nm{\sem{P_2}-\sem{\widetilde{P_2}}}_\Diamond &= \nm{( ((1-p)^3 + 3p(1-p)^2) I\circ I + (3p^2(1-p) + p^3) X_1\circ X_1)-I\circ I}_{\Diamond}\\
& = (3p^2(1-p) + p^3)(\nm{I\circ I - X\circ X}_{\Diamond})\\
& = 3p^2(1-p) + p^3.
\end{align}
This shows that $\nm{\sem{P_1}-\sem{\widetilde{P_1}}}_{\Diamond} > \nm{\sem{P_2}-\sem{\widetilde{P_2}}}_{\Diamond}$ for $p < \frac{1}{2}$.

Now we consider $\nm{\sem{P_3}-\sem{\widetilde{P_3}}}_{\Diamond}$. 
In program $\widetilde{P_3}$, the fault-tolerant identity operation is preceded and followed by $H$ gates.
Because $HXH = Z$, the $X$ errors in the previous equations are transformed into $Z$ errors.
So the denotation of the noisy version of $\overline{q} := \overline{I}[\overline{q}]$ in program $\widetilde{P_3}$ is given by
\begin{align}
  & ((1-p) I\circ I + pZ_1\circ Z_1)\otimes ((1-p) I\circ I + pZ_2\circ Z_2) \otimes ((1-p) I\circ I + pZ_3\circ Z_3)\\
  & \qquad = (1-p)^3 I\circ I + \sum_{i=1}^3 p(1-p)^2 Z_i\circ Z_i \\ 
  & \qquad \qquad\qquad+ \sum_{i,j\in \{1,2,3\}, i < j} p^2(1-p) Z_iZ_j\circ Z_iZ_j + p^3 Z_1Z_2Z_3\circ Z_1Z_2Z_3
\end{align} 
where $Z_i$ denotes a $Z$ error on qubit $q_i$. After error correction, this channel becomes
\[ ((1-p)^3 + 3p^2(1-p)) I\circ I + (3p(1-p)^2 + p^3) Z_1\circ Z_1 \]
Similar to the previous analysis, the state $\alpha\ket{000} + \beta\ket{111}$ will remain unchanged in the presence of an even number of $Z$ errors, so correction will leave the system in the correct state.
However, the state $\alpha\ket{000} + \beta\ket{111}$ will become $\alpha\ket{000} - \beta\ket{111}$, which is equivalent to $Z_1(\alpha\ket{000} + \beta\ket{111})$, in the presence of an odd number of $Z$ errors. 
The bit flip repetition code cannot correct (or detect) this type of error, so it will leave the state as-is.

Therefore, we have that 
\begin{align}
\sem{\widetilde{P_3}} &= ((1-p)^3 + 3p^2(1-p)) I\circ I + (3p(1-p)^2 + p^3) Z_1\circ Z_1. \label{eq:P3-sem}
\end{align} 
It follows that 
\begin{align}
\nm{\sem{P_3}-\sem{\widetilde{P_3}}}_\Diamond &= \nm{( ((1-p)^3 + 3p^2(1-p)) I\circ I + (3p(1-p)^2 + p^3) Z_1\circ Z_1)-I\circ I}_{\Diamond}\\
& = (3p(1-p) + p^3)(\nm{I\circ I - Z\circ Z}_{\Diamond})\\
& = 3p(1-p)^2 + p^3
\end{align}
This shows that $\nm{\sem{P_1}-\sem{\widetilde{P_1}}}_{\Diamond} < \nm{\sem{P_3}-\sem{\widetilde{P_3}}}_{\Diamond}$ for $p < \frac{1}{2}$.

% \else
% \fi

\end{document}